\def\BibTeX{{\rm B\kern-.05em{\sc i\kern-.025em b}\kern-.08em
    T\kern-.1667em\lower.7ex\hbox{E}\kern-.125emX}}
\newtheorem{definition}{Definition}
\newtheorem{theorem}{Theorem}
\newtheorem{lemma}[theorem]{Lemma}
\DeclareMathOperator*{\argmin}{\arg\min}   
\newcites{apx}{References}
\title{Meta-Learning for Online Update of Recommender Systems}
\author{
Minseok Kim\textsuperscript{\rm 1},
Hwanjun Song\textsuperscript{\rm 2},
Yooju Shin\textsuperscript{\rm 1},
Dongmin Park\textsuperscript{\rm 1},
Kijung Shin\textsuperscript{\rm 1},
Jae-Gil Lee\textsuperscript{\rm 1}\thanks{Jae-Gil Lee is the corresponding author.}}
\begin{document}

\newcommand{\thickbar}[1]{\mathbf{\bar{\text{$#1$}}}}

\DeclarePairedDelimiter\ceil{\lceil}{\rceil}
\DeclarePairedDelimiter\floor{\lfloor}{\rfloor}

\renewcommand{\algorithmicrequire}{\textsc{Input:}}
\renewcommand{\algorithmicensure}{\textsc{Output:}}
\renewcommand{\algorithmiccomment}[1]{/*~#1~*/}
\newcommand{\algname}{{MeLON}}

\newcommand\kijung[1]{\textcolor{red}{[Kijung: #1]}}
\newcommand\minseok[1]{\textcolor{purple}{#1}}
\newcommand\change[1]{\textcolor{magenta}{#1}}
\newcommand\add[1]{\textcolor{blue}{#1}}
\newcommand\red[1]{\textcolor{red}{#1}}
\maketitle
\begin{abstract}
Online recommender systems should be always aligned with users' current interest to accurately suggest items that each user would like. Since user interest usually \emph{evolves} over time, the update strategy should be \emph{flexible} to quickly catch users' current interest from continuously generated new user-item interactions. Existing update strategies focus either on the importance of each user-item interaction or the learning rate for each recommender parameter, but such \emph{one}-directional flexibility is insufficient to adapt to varying relationships between interactions and parameters.
In this paper, we propose \algname{}, a meta-learning based novel online recommender update strategy that supports \emph{two}-directional flexibility. It is featured with an \emph{adaptive} learning rate for each \emph{parameter-interaction pair} for inducing a recommender to quickly learn users' up-to-date interest.
The procedure of \algname{} is optimized following a meta-learning approach: it learns how a recommender learns to generate the optimal learning rates for future updates.
Specifically, \algname{} first enriches the meaning of each interaction based on previous interactions and identifies the role of each parameter for the interaction; and then combines these two pieces of information to generate an adaptive learning rate.
Theoretical analysis and extensive evaluation on three real-world online recommender datasets validate the effectiveness of \algname{}.
\end{abstract}

\vspace*{-0.5cm}
\section{Introduction}
\label{sec:introduction}
The widespread of mobile devices enables a large number of users to connect to a variety of online services, such as video streaming\,\cite{davidson2010youtube}, shopping\,\cite{linden2003amazon}, and news\,\cite{gulla2017adressa}, where each user seeks only a few items out of a myriad of items in services.
To keep users involved, online services struggle to meet each user's needs \emph{accurately} by deploying 
personalized recommender systems\,\cite{koren2008factorization}, which suggest the items potentially interesting to him/her. 
In an online setting where a user's \emph{current} interest changes constantly, the online recommender should catch up each user's up-to-date interest to prevent its service from being stale\,\cite{he2016fast}.
To this end, recommender models are updated continuously in response to new user-item interactions.

In modern online recommender systems, \emph{fine-tuning} has been widely employed to update models since it is infeasible to re-train the models from scratch whenever new user-item interactions come in. 
Specifically, pre-trained models\,(i.e., snapshots trained on past user-item interactions) are fine-tuned based \emph{only} on new user-item interactions. 
Fine-tuning not only requires less computational cost but also has sufficient capability to reflect up-to-date information\,\cite{zhang2020retrain}. 
However, because \emph{few-shot} incoming user-item interactions are very sparse in the user-item domain\,\cite{finn2019online},
the standard fine-tuning scheme would not suit online recommender systems to \emph{quickly} adapt to up-to-date user interest.
Therefore, the key challenge is to overcome this data sparsity for fine-tuning.

To cope with this challenge, previous researches have been actively studied in two directions.
\begin{itemize}[leftmargin=9pt, noitemsep]
\item \textbf{Importance reweighting} adjusts the importance of \emph{each} new user-item interaction\,\cite{he2016fast, shu2019meta}. 
These methods receive the \emph{loss} of a new user-item interaction from a recommender as a supervisory signal and then determine how much the recommender should be fine-tuned by each user-item interaction.
\item \textbf{Meta-optimization} controls how much \emph{each} recommender parameter should be fine-tuned from new user-item interactions\,\cite{zhang2020retrain}. 
These methods determine a parameter-wise optimization strategy such as a learning rate, given the loss of the new user-item interactions and each parameter's previous optimization history.
\end{itemize}

\begin{figure}[t!]
\begin{center}
\includegraphics[width=0.47\textwidth]{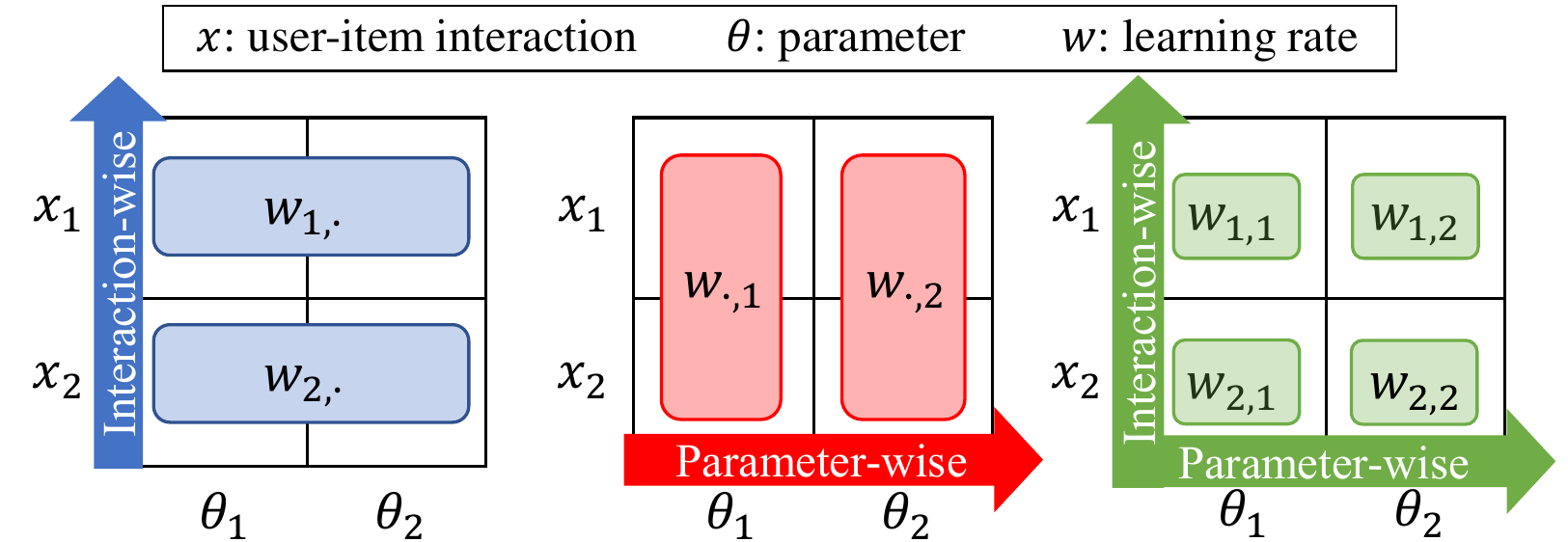}
\end{center}
\vspace*{-0.3cm}
\hspace*{0.05cm} {\scriptsize (a) Importance Reweighting.} \hspace*{0.2cm}{\scriptsize (b) Meta-Optimization.} \hspace*{0.45cm} {\scriptsize (c) Ours: \algname{}.}
\vspace*{-0.15cm}
\caption{Flexibility comparison of adjusting a learning rate $w$ to update a parameter $\theta$ given a user-item interaction $x$. While (a) importance reweighting and (b) meta-optimization support only one of the two learning perspectives, (c) \algname{} supports both of them.} 
\label{fig:motivating_example}
\vspace*{-0.6cm}
\end{figure}

\begin{figure*}[htb!]
\begin{center}
\includegraphics[width=1\textwidth]{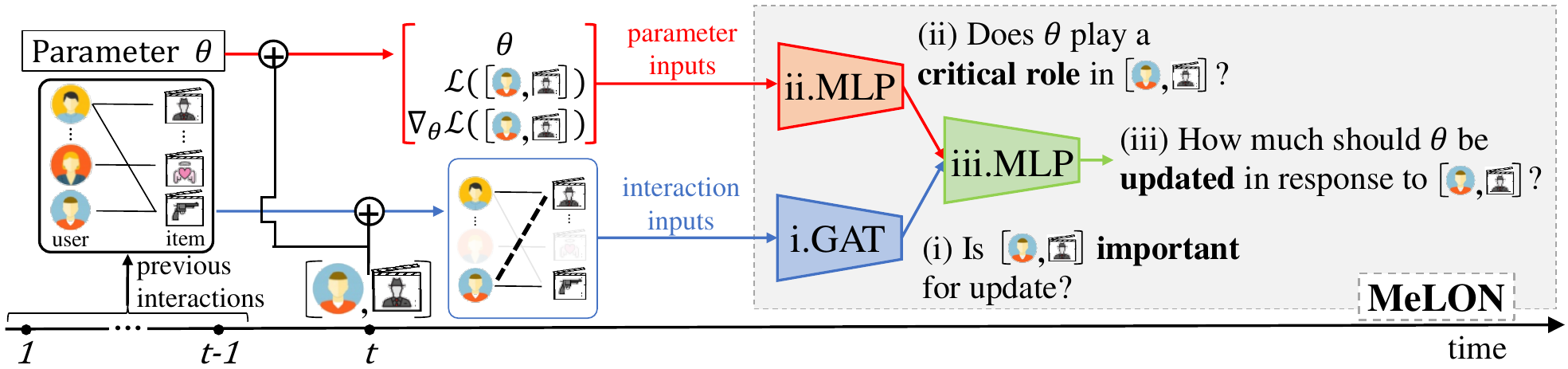}
\end{center}
\vspace*{-0.5cm}
\caption{Illustration of \algname{}'s procedure. \algname{} (i) represents the importance of a given user-item interaction for the current update based on previous interactions, (ii) identifies the role of each parameter for the interaction, and (iii) adapts a learning rate specific to each parameter-interaction pair.}
\label{fig:MeLON_idea_procedure}
\vspace*{-0.5cm}
\end{figure*}

These two orthogonal approaches focus on different aspects on learning: importance reweighting focuses on the impact of each \emph{user-item interaction}, as shown in Figure \ref{fig:motivating_example}(a), while meta-optimization focuses on that of each \emph{parameter}, as shown in Figure \ref{fig:motivating_example}(b).
That is, both approaches support the \emph{one}-directional flexibility in the either data or parameter perspective.
If we regard fine-tuning at each time period as a distinct task\,\cite{zhang2020retrain}, the role of each parameter in a recommender varies for different user-item interactions, because it is well known that an explanatory factor\,(i.e., parameter) in a neural network has different relevance toward different tasks\,\cite{bengio2013representation}. 
Thus, we contend that the flexibility in \emph{both} data and parameter perspectives should be achieved.
However, the two existing approaches lack this level of flexibility, possibly leading to sub-optimal recommendation quality.

In this paper, we propose, \textbf{\algname{}} (\underline{Me}ta-\underline{L}earning for \underline{ON}line recommender update), a novel online recommender update strategy that supports the flexibility in \emph{both} data and parameter perspectives.
It learns to \emph{adaptively} adjust the learning rate of each parameter for each new user-item interaction, as illustrated in Figure \ref{fig:motivating_example}(c).
Because the optimality of a learning rate depends on how much informative on both perspectives and how to exploit them, we derive three research objectives: (i) how to describe {the importance of }the task with a new user-item interaction, (ii) how to identify the role of each parameter for the task, then (iii) how to determine the optimal learning rate for each pair of interactions and parameters based on their mutual relevance. 

Corresponding to the three research questions, \algname{} goes through the following three steps, as shown in Figure \ref{fig:MeLON_idea_procedure}.
First, because exploiting the connections from the new user-item interaction is very helpful to mitigate the \emph{data sparsity} issue, \algname{} employs a graph attention network\,(GAT)\,\cite{velivckovic2018graph} to represent {the importance of} each new user-item interaction along with previous user-item interactions. 
Then, an explicit neural mapper dynamically captures the role of a parameter, assessing its contribution to the new user-item interaction by the loss and gradient.
Last, the two representations---for an interaction and a parameter---are jointly considered to generate the optimal learning rate specific to the interaction-parameter pair.
{\algname{} repeats the three steps for every online update, following the \emph{learning-to-learn} philosophy of meta-learning. That is, the meta-model \algname{} learns to provide the learning rates such that the recommender model updated using those learning rates quickly grasps what users want now and succeeds recommendations for them in the future.}

The effectiveness of \algname{} is extensively evaluated on two famous recommender algorithms using three real-world online service datasets in a comparison with six update strategies.
In short, the results show that \algname{} successfully improves the recommendation accuracy by up to {29.9\%} in term of HR@5.
Such capability of \algname{} is empowered by two-directional flexibility under learning-to-learn strategy, which is further supported by the theoretical analysis and ablation study.

\section{Preliminary and Related Work}
\label{sec:preliminary}

Online recommenders build a pre-trained model using previous user-item interactions, and the pre-trained model is continuously updated in response to incoming user-item interactions. A deep neural network\,(DNN) is widely used for a recommender, and it is updated for each \emph{mini-batch}\,\cite{ruder2016overview} to quickly adapt to users' up-to-date interest. Let $x = (t, u, i)$ denote a user-item interaction between a user $u$ and an item $i$ at time $t$. Suppose that a mini-batch $\mathcal{B}_t$ consists of $n$ new user-item interactions at time $t$.
Then, the recommender at time $t$, parameterized by $\Theta_{t} = \{\theta_{t,1}, \ldots,\theta_{t,M}\}$ where $M$ is the total number of parameters, is updated by
\begin{equation}
\begin{split}
{\Theta}_{t+1} &= {\Theta}_t - {\eta}\nabla_{{\Theta}_t}\sum_{x \in \mathcal{B}_t}\frac{1}{n}\mathcal{L}_{\Theta_{t}}(x)\\
&= {\Theta}_t - \nabla_{{\Theta}_t}\mathcal{L}_{\Theta_t}(\mathcal{B}_t)^{\top}\boldsymbol{W}.
\label{eq:Loss_default}
\end{split}
\end{equation}
Here, ${\eta}$ is a learning rate, and 
${\mathcal{L}_{\Theta_t}(\mathcal{B}_t)}\in \mathbb{R}^{n}$ denotes the loss of new user-item interactions in the mini-batch $\mathcal{B}_t$ by the recommender model $\Theta_t$ under any objective function $\mathcal{L}$ such as mean squared error\,(MSE) or Bayesian personalized ranking\,(BPR)\,\cite{rendle2012bpr}. 
The \emph{learning rate matrix} $\boldsymbol{W} \in \mathbb{R}^{n \times M}$ is used to represent the learning rate for a parameter $\theta_{m}$ in response to each user-item interaction $x$, where all the learning rates (i.e., all elements of $\boldsymbol{W}$) are typically set equally to $w = \frac{\eta}{n}$. 
Then, the overall performance is derived by evaluating each recommender snapshot for a given mini-batch at each time step,
\begin{align}
\min_{\{\Theta_t\}_{t=1}^{T}} \sum^{T}_{t=1}\sum_{x \in \mathcal{B}_{t}} \mathcal{L}_{\Theta_t}(x) = \min_{\{\Theta_t\}_{t=1}^{T}} \sum^{T}_{t=1}\mathcal{L}_{\Theta_t}(\mathcal{B}_t).
\label{eq:Loss_total}
\end{align}

The two directions---importance reweighting and meta-optimization---for online recommender updates are characterized by the construction of the learning rate matrix $\boldsymbol{W}$.

\subsection{Importance Reweighting}

Instead of assigning the equal importance $1 / n$ to each user-item interaction as in Eq.~\eqref{eq:Loss_default}, \emph{importance reweighting}\,\cite{he2016fast, shu2019meta} assigns a different importance determined by a reweighting function $\phi^{I}(\cdot)$, 
\begin{align}
\begin{split}
{\Theta}_{t+1} &= {\Theta}_t - {\eta}\nabla_{{\Theta}_t}\sum_{x \in \mathcal{B}_t} \mathcal{L}_{\Theta_t}(x)\cdot \underbrace{\phi^{I}\big(\mathcal{L}_{\Theta_t}(x)\big)}_\text{interaction-wise}\\
&= {\Theta}_t - \nabla_{{\Theta}_t}\mathcal{L}_{\Theta_t}(\mathcal{B}_t)^{\top}\boldsymbol{W}^{I},
\label{eq:Loss_importance_reweighting}
\end{split}
\end{align}
where $\phi^{I}(\cdot)$ receives the loss of each user-item interaction as its input.
That is, $\boldsymbol{W}^{I} \in \mathbb{R}^{n \times M}$ is constructed such that each row has the same value returned by $\phi^{I}(\cdot)$. 
The representative methods differ in the detail of $\phi^{I}(\cdot)$, as follows:
\begin{itemize}[leftmargin=9pt, noitemsep]
    \item eALS\,\cite{he2016fast} applies a heuristic rule that assigns a weight for each new user-item interaction. Typically, a high weight is set to learn the current user interest.
    \item MWNet\,\cite{shu2019meta} maintains an external meta-model that adaptively assesses the importance of a given user-item interaction for model update that lets the updated model minimize the loss on meta-data\,(e.g., next recommendation in online update).
\end{itemize}

However, this scheme does not support the varying role of a parameter for different tasks.


\subsection{Meta-Optimization}
On the other hand, \emph{meta-optimization}\,\cite{ravi2016optimization, li2017meta, du2019sequential, zhang2020retrain} aims at adjusting the learning rate of each recommender parameter $\theta_{t,m}$ via a
learning rate function $\phi^{P}(\cdot)$, 
\begin{equation}
\begin{split}
{\Theta}_{t+1} &={\Theta}_{t} - \underbrace{\phi^{P}\Big(\mathcal{L}_{\Theta_t}(\mathcal{B}_t), {\Theta}_{t}\Big)}_{\text{parameter-wise}} \cdot\nabla_{{\Theta}_{t}}\sum_{x \in \mathcal{B}_t}\frac{1}{n}\mathcal{L}_{\Theta_t}(x)\\
&= {\Theta}_t - \nabla_{{\Theta}_t}\mathcal{L}_{\Theta_t}(\mathcal{B}_t)^{\top}\boldsymbol{W}^{P},
\label{eq:Loss_parameter_optimizer}
\end{split}
\end{equation}
where the function  $\phi^{P}(\cdot)$ receives the training loss of a mini-batch and the recommender parameters $\Theta_{t}$ as its input. That is, $\boldsymbol{W}^{P} \in \mathbb{R}^{n \times M}$ is constructed such that each column has the same value returned by $\phi^{P}(\cdot)$. 
Again, the representative algorithms differ in the detail of $\phi^{P}(\cdot)$, as follows:
\begin{itemize}[leftmargin=9pt, noitemsep]
    \item S\textsuperscript{2}Meta\,\cite{du2019sequential} exploits MetaLSTM\,\cite{ravi2016optimization} to decide how much to forget a parameter's previous knowledge and to learn new user-item interactions via the gating mechanism of LSTM\,\cite{hochreiter1997long}.
    \item MetaSGD\,\cite{li2017meta} maintains one learnable parameter for each model parameter to adjust its learning rate based on the loss.
    \item SML\,\cite{zhang2020retrain} maintains a convolutional neural network\,(CNN)-based meta-model with pretrained and fine-tuned parameters. It decides how much to combine the knowledge for previous interactions and that for new user-item interactions for each parameter.
\end{itemize}

Contrary to importance reweighting, this scheme does not support the varying importance of user-item interactions.


\subsection{Difference from Previous Work}

While previous update strategies achieve only \emph{one}-directional flexibility, i.e., $\phi^{1D} \in \{\phi^{I}, \phi^{P}\}$, we aim at developing an online update strategy ${\phi}^{2D}$ that provides \emph{two}-directional flexibility for the learning rates to be adaptive in \emph{both} data and parameter perspectives,
\begin{equation}
\begin{split}
{\Theta}_{t+1} &={\Theta}_{t} - \nabla_{{\Theta}_{t}}\!\!\sum_{x \in \mathcal{B}_t}\!\!\mathcal{L}_{\Theta_t}(x)\cdot \underbrace{{\phi}^{2D}\big(x,\mathcal{L}_{\Theta_t}(x),\Theta_t\big)}_\text{interaction-/parameter-wise}\\
&= {\Theta}_t - \nabla_{{\Theta}_t}\mathcal{L}_{\Theta_t}(\mathcal{B}_t)^{\top}{\boldsymbol{W}^{2D}},
\label{eq:Loss_MeLON}
\end{split}
\end{equation}
where the function receives an individual user-item interaction, the training loss of the interaction, and the recommender parameters $\Theta_{t}$ as its input, which are essential ingredients to be adaptive to both user-item interactions and parameters. That is, $\boldsymbol{W}^{2D}\in \mathbb{R}^{n \times M}$ is constructed such that each entry can be filled with a different value returned by $\phi^{2D}(\cdot)$ even when either a user-item interaction or parameter is identical to other entries as in Figure \ref{fig:motivating_example}(c),
\begin{align}
\nonumber
& (x = x^{\prime}) \wedge (\theta_m\!\!\neq\!\theta_{m^{\prime}}) \\ 
& \!\nRightarrow\! {\phi}^{2D}\big(x,\mathcal{L}_{\Theta_t}(x),\theta_m\big) = {\phi}^{2D}\big(x^{\prime},\mathcal{L}_{\Theta_t}(x^{\prime}),\theta_{m^\prime}\big), \label{eq:flex:one} \\
\nonumber 
& (x \neq x^{\prime}) \wedge (\theta_m\!\!=\!\theta_{m^{\prime}})\\
& \!\nRightarrow\! {\phi}^{2D}\big(x,\mathcal{L}_{\Theta_t}(x),\theta_m\big)={\phi}^{2D}\big(x^{\prime},\mathcal{L}_{\Theta_t}(x^{\prime}),\theta_{m^\prime}\big). \label{eq:flex:two}
\end{align}



\section{Methodology: \algname{}}
\label{sec:methodology}

\algname{} is a \emph{meta-model} that determines the optimal learning rate for each recommender parameter regarding a user-item interaction. Figure \ref{fig:recommender_update} shows the collaboration between a recommender model and the meta-model \algname{}. For each iteration of online update, a recommender model provides its parameters $\Theta_t$ and the loss $\mathcal{L}_{\Theta_t}({\mathcal{B}_t})$ of the current batch $\mathcal{B}_t$ to \algname{}; then, additionally using the previous interaction history, \algname{} provides the learning rate matrix $\boldsymbol{W}^{2D}$, which is learned to reduce $\mathcal{L}_{\Theta_t}({\mathcal{B}_t})$ as much as possible, to the recommender model; finally, the recommender model is updated using $\boldsymbol{W}^{2D}$ for $\mathcal{B}_t$. Please refer to the Section A of the supplementary material for more details. The internal procedure of \algname{} is described according to the three research questions: (i) representing the relevance between a user and an item for each user-item interaction, (ii) representing the role of each parameter for a user-item interaction, and (iii) determining the learning rate for each pair of user-item interactions and parameters. 

\begin{figure}[t!]
\begin{center}
\includegraphics[width=0.47\textwidth]{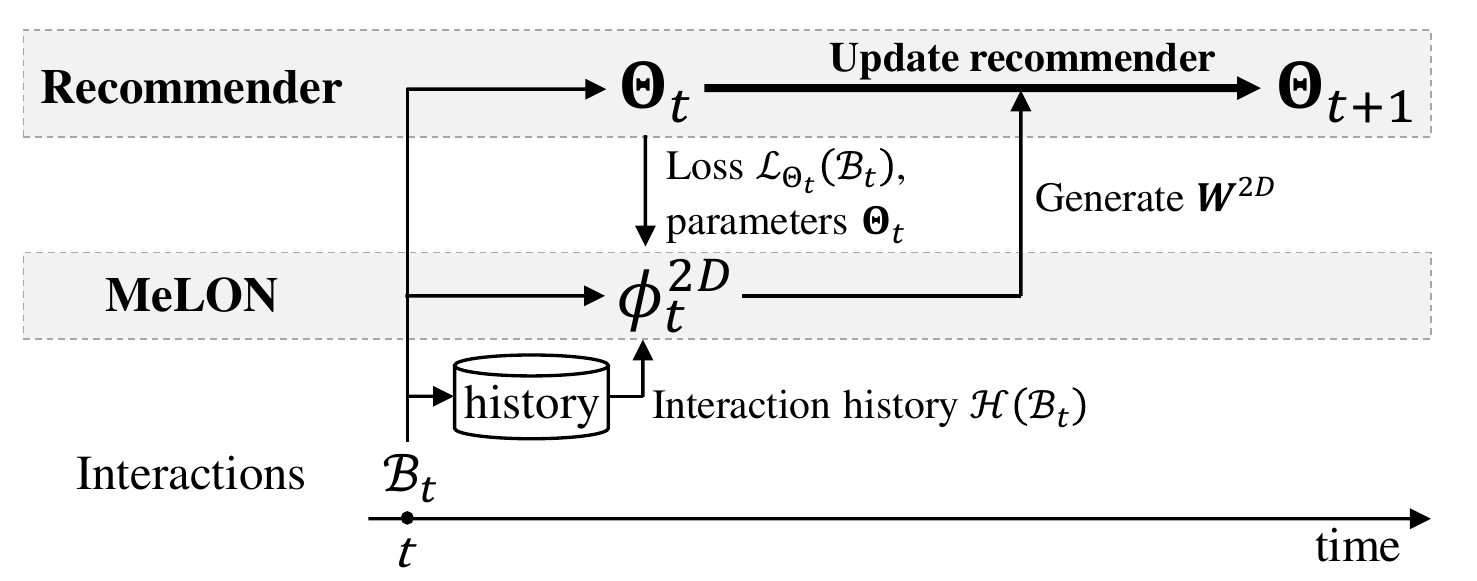}
\end{center}
\vspace*{-0.3cm}
\caption{Online update procedure with the meta-model \algname{} involved.}
\label{fig:recommender_update}
\vspace*{-0.3cm}
\end{figure}

\subsection{Step I: Representing User-Item Interaction}

Because a single user-item interaction may not contain sufficient information, we utilize the information from the previous interaction history by adding the users and items connected to the user-item interaction. More specifically, the latent representation of the user-item interaction is derived using a graph attention network\,(GAT) on the bipartite graph that represents the interactions between users and items received until the current time. The bipartite graph for a user-item interaction $x$ is constructed from the users and items in the \emph{interaction history} in Definition \ref{def:history}.

\begin{definition}\,{\sc (Interaction History)}
Given a user-item interaction $x = (t, u, i)$, the \emph{user interaction history of $u$} is the set of items interacted with $u$ before $t$, $\mathcal{H}_{user}(x) = \{ i^\prime \mid \exists (t^\prime, u, i^\prime) \in \mathcal{X} \text{ s.t. } t^\prime < t \}$, where $\mathcal{X}$ is the entire set of user-item interactions; similarly, the \emph{item interaction history of $i$} is the set of users interacted with $i$ before $t$, $\mathcal{H}_{item}(x) = \{ u^\prime \mid \exists(t^\prime, u^\prime, i) \in \mathcal{X} \text{ s.t. } t^\prime < t \}$. \hfill $\Box$
\label{def:history}
\end{definition}

For the bipartite graph, the users in $\mathcal{H}_{user}(x)$ constitute the user side, and the items in $\mathcal{H}_{item}(x)$ constitute the item side. Here, each user (or item) node is represented by the user (or item) embedding used in the recommender model. An edge is created for each of the previous user-item interactions, and its weight is determined by the attention score between them. Then, a user (or item) embedding is \emph{extended} using the connections to the other side on the bipartite graph, as specified in Definition \ref{def:extended_embedding}.

\begin{definition}\,{\sc (Extended Embedding)}
\label{def:extended_embedding}
Given a user-item interaction $x = (t, u, i)$, let $\mathbf{e}_{u}$ and $\mathbf{e}_{i^\prime}$ be the embeddings of $u$ and $i^\prime \in \mathcal{H}_{user}(x)$. Then, the \emph{extended embedding} of $u$, $\tilde{\mathbf{e}}_{u}$, is defined as
\begin{equation}
\tilde{\mathbf{e}}_{u} = {\rm ReLU}(W_{user}\cdot [\mathbf{e}_{u}, \sum_{i^{\prime} \in \mathcal{H}_{user}(x)}\!\!\!\!\!\!\!\!\!\alpha_{ui^{'}}\mathbf{e}_{i^\prime}] + \mathbf{b}_{user}),
\end{equation}
where $W_{user}$ and $\mathbf{b}_{user}$ are a learnable weight matrix and a bias vector.
Here, $\alpha_{ui^\prime}$ indicates the attention score for $i^\prime$ and is derived by the GAT, as follows:
\begin{equation}
{\alpha}_{ui^{'}} = {\rm softmax}\big({\rm LeakyReLU}\big([\mathbf{e}_u,\mathbf{e}_{i^{'}}]^{\top}\mathbf{a}_U \big)\big),
\label{eq:edge_ui}
\end{equation}
where $\mathbf{a}_U$ is a learnable attention vector. In addition, the \emph{extended embedding} of an item $i$, $\tilde{\mathbf{e}}_{i}$, is defined in the same way to the opposite direction.
\hfill $\Box$
\end{definition}

Last, the two extended embeddings, $\tilde{\mathbf{e}}_{u}$ and $\tilde{\mathbf{e}}_{i}$, are concatenated and gone through a linear mapping to learn the relevance between the user and the item, as specified in Definition \ref{def:interaction_representation}. As a result, the \emph{interaction representation} contains not only the rich information about a user and an item but also the relevance between them.

\begin{definition}\,{\sc (Interaction Representation)}
\label{def:interaction_representation}
Given a user-item interaction $x = (t, u, i)$, let $\tilde{\mathbf{e}}_{u}$ and $\tilde{\mathbf{e}}_{i}$ be the extended embeddings of $u$ and $i$, respectively. The \emph{interaction representation} of $x$, $\mathbf{h}_{x}$, is defined by
\begin{equation}
\mathbf{h}_{x} = {\rm ReLU}\big(W_x \cdot [\tilde{\mathbf{e}}_{u},\tilde{\mathbf{e}}_i] + \mathbf{b}_{x} \big),
\label{eq:propagation_item_to_user}
\end{equation}
where $W_x$ and $\mathbf{b}_{x}$ are a learnable weight matrix and a bias vector. \hfill $\Box$
\end{definition}

\subsection{Step II: Representing Parameter Role}

Because it is well known that a parameter in a neural network has different relevance toward different tasks\,(user-item interactions in our study)\,\cite{bengio2013representation}, we contend that a parameter has a different role for each user-item interaction. The \emph{role} of a parameter can be roughly defined as its degree of impact on users (or items) of common characteristics. For example, a specific parameter may have a high impact on action films, while another parameter may have a high impact on romance films. 

To help find a parameter role, the latent representation of a parameter is derived using three types of information: the current value of a parameter $\theta_{t,m}$, the loss $\mathcal{L}_{\Theta_t}(x)$ of a recommender model for a given user-item interaction $x$, and the gradient $\nabla_{\theta_{t,m}}\mathcal{L}_{\Theta_t}(x)$ of the loss with respect to the parameter.  The loss represents how much the recommender model parameterized by $\Theta_t$ has not learned that user-item interaction. Each gradient represents how much the corresponding parameter needs to react to that loss; we expect that relevant parameters usually have higher gradients than irrelevant ones. Thus, putting them together, they can serve as useful information for determining a parameter role.

Symmetric to the interaction representation in Definition \ref{def:interaction_representation}, the \emph{role representation} is obtained through a multi-layer perceptron\,(MLP), as specified in Definition \ref{def:role_representation}. Because the magnitude of the loss and gradient varies across the pair of interactions and parameters, we apply a preprocessing technique\,\cite{ravi2016optimization, andrychowicz2016learning} to adjust the scale of the loss and gradient as well as to separate their magnitude and sign. As a result, the output of the MLP, $\mathbf{h}_{\theta_{t,m}}$, is regarded to represent the role of $\theta_{t,m}$ with respect to the given user-item interaction $x$.

\begin{definition}\,{\sc (Role Representation)}
\label{def:role_representation}
Given a parameter $\theta_{t,m}$ and a user-item interaction $x$, the \emph{role representation} of $\theta_{t,m}$, $\mathbf{h}_{\theta_{t,m}}$ is defined by
\begin{equation}
\mathbf{h}_{\theta_{t,m}} = {\rm MLP}(\big[\theta_{t,m},~\mathcal{L}^{}_{\Theta_t}(x),~ \nabla_{\theta_{t,m}}\mathcal{L}^{}_{\Theta_t}(x)\big]),
\label{eq:neural_mapper}
\end{equation}
where the MLP consists of $L$ linear mapping layers each followed by the ReLU activation function. \hfill $\Box$
\end{definition}

\subsection{Step III: Adapting Learning Rate}

The resulting two representations, $\mathbf{h}_{x}$ and $\mathbf{h}_{\theta_{t,m}}$, respectively, contain rich information about the importance of the user-item interaction $x$ and the role of the parameter $\theta_{t,m}$. Hence, we employ a linear mapping layer which fuses the two representations and adapts the learning rate $w_{x,\theta_{t,m}}^{lr}$ to the given interaction-parameter pair, as follows:
\begin{equation}
w_{x,\theta_{t,m}}^{lr} = \sigma(W_{lr}\cdot [\mathbf{h}_{x}, \mathbf{h}_{\theta_{t,m}}] + \mathbf{b}_{lr}),
\end{equation}
where $\sigma$ is a sigmoid function. The learning rate is likely to be high if the user-item interaction is important while the parameter plays a key role to the interaction, so that the parameter is quickly adapted to the interaction.
Then, the learning rate is used to update the current parameter $\theta_{t,m}$ for the user-item interaction $x$, as follows:
\begin{equation}
{\theta}_{t+1,m} = \theta_{t,m} - w_{x,\theta_{t,m}}^{lr} \cdot \nabla_{\theta_{t,m}}\mathcal{L}_{\Theta_t}(x).
\label{eq:MeLON_update_equation}
\end{equation}
\section{Theoretical Analysis on Update Flexibility}
\label{sec:theory}

Our suggested online update strategy \algname{}, ${\phi}^{2D}$, leaves a question of how much benefit it can bring compared with the previous two strategies $\phi^{I}$ and $\phi^{P}$.
As an effort to resolve it, we present a theoretical analysis of the advantage of flexible update in terms of rank, where the rank of a learning rate matrix $rank(\boldsymbol{W})$ demonstrates how flexible an update can be via $\boldsymbol{W}$.
That is, when its rank can be higher, $\boldsymbol{W}$ can support more flexible updates of parameters in response to new interactions.
The previous strategies limit the rank to $1$ since, as discussed above, they provide an identical learning rate either to every interaction or to every parameter. In \algname{}, the rank can be higher since the learning rates are adapted to each interaction-parameter pair.
In this regard, we show that previous strategies may suffer from large optimality gap with an optimal learning rate matrix $\boldsymbol{W}^{*}$, while  the gap can be reduced by \algname{}.

We denote the recommender parameters updated by $\boldsymbol{W}$ as $\hat{\Theta}$  and the optimal parameters as $\Theta^{*}$.
Then, the optimality gap between the two sets of parameters $\|\Theta^{*} - \hat{\Theta}\|_2$ is dependent on the gap between the learning rate matrices $\|\boldsymbol{W}^{*} - \boldsymbol{W}\|_2$ in terms of spectral norm as follows:
\begin{equation}
\nonumber
\begin{split}
\|\Theta^{*}\!\!- \hat{\Theta}\|_2 &= \|(\Theta - \nabla_{\Theta}\mathcal{L}\cdot\boldsymbol{W^{*}})- (\Theta - \nabla_{\Theta}\mathcal{L}\cdot\boldsymbol{W})\|_2\\
&= \|-(\nabla_{\Theta}\mathcal{L}\cdot\boldsymbol{W^{*}})-(-\nabla_{\Theta}\mathcal{L}\cdot\boldsymbol{W})\|_2\\ 
&= \|\nabla_{\Theta}\mathcal{L}\cdot(\boldsymbol{W^{*}}-\boldsymbol{W})\|_2\\
&\leq \|\nabla_{\Theta}\mathcal{L}\|_2\cdot\|(\boldsymbol{W^{*}}-\boldsymbol{W})\|_2.
\end{split}
\end{equation}
Then, a lower bound of $\|\boldsymbol{W}^{*} - \boldsymbol{W}\|_2$ is obtained from the singular values $\sigma$ of $\boldsymbol{W}^{*}$, as formalized in Lemma \ref{lemma:Eckart–Young–Mirsky}.
\begin{lemma}
\emph{\cite{eckart1936approximation}} Given $\boldsymbol{W}^{*}$ with its singular value decomposition $\boldsymbol{U\Sigma V}$ and $k\in \{1,\cdots, rank(\boldsymbol{W}^{*})-1\}$, let 
$\boldsymbol{W}^{*}_k = \sum_{r=1}^{k}\sigma_{r}\boldsymbol{U}_{r}\boldsymbol{V}_{r}$, where $\sigma_r$ is the $r$-th largest singular value. 
Then, $\boldsymbol{W}^{*}_k$ is the best rank-$k$ approximation of $\boldsymbol{W}^{*}$ in terms of spectral norm, 
\label{lemma:Eckart–Young–Mirsky}
\begin{align}
\nonumber
\min_{\boldsymbol{W}:rank(\boldsymbol{W})=k}\!\!\|\boldsymbol{W}^{*}\!\!-\!\boldsymbol{W}\|_2 = \|\boldsymbol{W}^{*}\!\!-\!\boldsymbol{W}^{*}_k\|_2 = \sigma_{k+1}.
\end{align}
\begin{proof}
See \cite{eckart1936approximation}.
\end{proof}
\end{lemma}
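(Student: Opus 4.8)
The plan is to prove Lemma~\ref{lemma:Eckart–Young–Mirsky} by the classical two-sided argument: exhibit a rank-$k$ matrix achieving spectral-norm error $\sigma_{k+1}$, and show no rank-$k$ matrix can do better. Write the SVD as $\boldsymbol{W}^{*}=\sum_{r=1}^{\rho}\sigma_r\boldsymbol{U}_r\boldsymbol{V}_r^{\top}$ with $\rho=rank(\boldsymbol{W}^{*})$, $\sigma_1\ge\cdots\ge\sigma_\rho>0$, and orthonormal singular vectors, and regard matrices as linear maps on $\mathbb{R}^{M}$ (the column index set). For achievability I would just evaluate $\|\boldsymbol{W}^{*}-\boldsymbol{W}^{*}_k\|_2$ directly: the difference is $\sum_{r=k+1}^{\rho}\sigma_r\boldsymbol{U}_r\boldsymbol{V}_r^{\top}$, already in SVD form with top singular value $\sigma_{k+1}$, so its spectral norm is exactly $\sigma_{k+1}$; and since $k\le\rho-1$ forces $\sigma_k>0$, the truncation $\boldsymbol{W}^{*}_k$ has rank exactly $k$ and is therefore a feasible competitor. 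Hence $\min_{rank(\boldsymbol{W})=k}\|\boldsymbol{W}^{*}-\boldsymbol{W}\|_2\le\sigma_{k+1}$.

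The matching lower bound is the only step with real content, and I would prove it by a dimension-counting argument. Fix any $\boldsymbol{W}$ with $rank(\boldsymbol{W})\le k$; then $\dim\ker\boldsymbol{W}\ge M-k$. Let $S=\mathrm{span}\{\boldsymbol{V}_1,\dots,\boldsymbol{V}_{k+1}\}$, which has dimension $k+1$ precisely because the hypothesis $k\le rank(\boldsymbol{W}^{*})-1$ guarantees $k+1\le\rho$. Since $\dim\ker\boldsymbol{W}+\dim S\ge(M-k)+(k+1)>M$, the two subspaces intersect nontrivially; choose a unit $z\in\ker\boldsymbol{W}\cap S$ and expand $z=\sum_{r=1}^{k+1}c_r\boldsymbol{V}_r$ with $\sum_r c_r^2=1$. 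Then $(\boldsymbol{W}^{*}-\boldsymbol{W})z=\boldsymbol{W}^{*}z=\sum_{r=1}^{k+1}\sigma_r c_r\boldsymbol{U}_r$, and orthonormality of the $\boldsymbol{U}_r$ gives
\begin{equation}
\nonumber
\|\boldsymbol{W}^{*}-\boldsymbol{W}\|_2^2 \;\ge\; \|(\boldsymbol{W}^{*}-\boldsymbol{W})z\|_2^2 \;=\; \sum_{r=1}^{k+1}\sigma_r^2 c_r^2 \;\ge\; \sigma_{k+1}^2\sum_{r=1}^{k+1}c_r^2 \;=\; \sigma_{k+1}^2 .
\end{equation}
Taking square roots yields $\|\boldsymbol{W}^{*}-\boldsymbol{W}\|_2\ge\sigma_{k+1}$ for every matrix of rank at most $k$. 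Combining this with the achievability bound gives the displayed chain of equalities, with the minimum attained at $\boldsymbol{W}^{*}_k$.

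The main obstacle is isolating the right $(k+1)$-dimensional test subspace $S$: it must be spanned by the \emph{top} right singular vectors so that $\boldsymbol{W}^{*}$ cannot contract any unit vector of $S$ below norm $\sigma_{k+1}$, and its dimension must be exactly one more than the allowed rank so that rank--nullity forces a nonzero vector killed by $\boldsymbol{W}$ but not by $\boldsymbol{W}^{*}$. Everything else --- the spectral norm of a matrix presented in SVD form, the orthonormal expansion, and the submultiplicativity already invoked just above Lemma~\ref{lemma:Eckart–Young–Mirsky} --- is routine. This is exactly the argument of \cite{eckart1936approximation}.
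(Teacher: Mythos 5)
Your proof is correct. The paper itself gives no argument for this lemma --- its ``proof'' is just the citation to \cite{eckart1936approximation} --- so there is nothing to diverge from; what you have written is the standard two-sided proof of the Eckart--Young--Mirsky theorem for the spectral norm (achievability by evaluating the truncated SVD, and the lower bound by intersecting $\ker\boldsymbol{W}$ with the span of the top $k+1$ right singular vectors via rank--nullity), which is exactly the argument the citation points to. The only cosmetic mismatch is that the lemma's minimum is stated over matrices of rank exactly $k$ while your lower bound holds for all matrices of rank at most $k$; that is strictly stronger and costs nothing.
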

Based on Lemma \ref{lemma:Eckart–Young–Mirsky}, we show that a flexible update strategy $\phi^{2D}$ can enjoy smaller optimality gap than the previous strategies, which are one-directionally flexible. 
\begin{lemma}
For $\boldsymbol{W}^{I}$ and $\boldsymbol{W}^{P}$ (see Eq. \eqref{eq:Loss_importance_reweighting} and Eq. \eqref{eq:Loss_parameter_optimizer}), $rank(\boldsymbol{W}^{I})=rank(\boldsymbol{W}^{P})=1$ holds. 
\end{lemma}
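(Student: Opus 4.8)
\emph{Proof sketch.} The plan is to write each of $\boldsymbol{W}^{I}$ and $\boldsymbol{W}^{P}$ explicitly as an outer product of two nonzero vectors, which caps the rank at one, and then to rule out the degenerate rank-zero case via positivity of the learning rates. This is a structural observation rather than a computation, so I do not expect a genuine obstacle; the only point requiring care is making the positivity assumption explicit so that the conclusion is rank \emph{exactly} $1$ (which is what the subsequent Eckart--Young--Mirsky argument needs, via $rank(\boldsymbol{W}^{1D}) = 1 < rank(\boldsymbol{W}^{*})$), not merely rank at most $1$.

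First I would treat $\boldsymbol{W}^{I}$. By the construction underlying Eq.~\eqref{eq:Loss_importance_reweighting}, the factor $\phi^{I}\big(\mathcal{L}_{\Theta_t}(x)\big)$ multiplying the row associated with interaction $x$ has no dependence on the parameter index $m$; hence all entries in any fixed row of $\boldsymbol{W}^{I}$ coincide. Collecting the per-interaction values into $\mathbf{w}^{I} \in \mathbb{R}^{n}$ and writing $\mathbf{1}_M \in \mathbb{R}^{M}$ for the all-ones vector, this says
\[
\boldsymbol{W}^{I} = \mathbf{w}^{I}\,\mathbf{1}_M^{\top}, \qquad \boldsymbol{W}^{P} = \mathbf{1}_n\,(\mathbf{w}^{P})^{\top},
\]
where the second identity is obtained symmetrically: by Eq.~\eqref{eq:Loss_parameter_optimizer} the factor $\phi^{P}\big(\mathcal{L}_{\Theta_t}(\mathcal{B}_t),\Theta_t\big)$ assigned to the column of $\theta_{m}$ is independent of the interaction index, so every column of $\boldsymbol{W}^{P}$ is constant, with $\mathbf{w}^{P}\in\mathbb{R}^{M}$ the vector of per-parameter values and $\mathbf{1}_n\in\mathbb{R}^{n}$. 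An outer product of two vectors has rank at most $1$, so $rank(\boldsymbol{W}^{I})\le 1$ and $rank(\boldsymbol{W}^{P})\le 1$.

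It remains to exclude the possibility that either matrix is the zero matrix. Here I would invoke that the learning rates delivered by these strategies are strictly positive — in the baseline all entries equal $w=\eta/n>0$, and $\phi^{I},\phi^{P}$ likewise return positive values — so $\mathbf{w}^{I}\neq\mathbf{0}$ and $\mathbf{w}^{P}\neq\mathbf{0}$, while $\mathbf{1}_M\neq\mathbf{0}$ and $\mathbf{1}_n\neq\mathbf{0}$. A product of two nonzero vectors is a nonzero matrix, hence of rank exactly $1$, giving $rank(\boldsymbol{W}^{I})=rank(\boldsymbol{W}^{P})=1$ and completing the argument.
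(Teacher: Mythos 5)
Your argument is essentially the paper's own proof: the paper likewise observes that all columns of $\boldsymbol{W}^{I}$ coincide (each equal to the vector of $\phi^{I}$-values) and all rows of $\boldsymbol{W}^{P}$ coincide, which is exactly your outer-product factorization $\boldsymbol{W}^{I}=\mathbf{w}^{I}\mathbf{1}_M^{\top}$ and $\boldsymbol{W}^{P}=\mathbf{1}_n(\mathbf{w}^{P})^{\top}$. Your explicit exclusion of the rank-zero case via positivity of the learning rates is a small point the paper leaves implicit, but otherwise the two proofs are the same.
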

\vspace*{-0.25cm}
\begin{proof}
Every column of $\boldsymbol{W}^{I}$ equals to $\phi^{I}\big(\mathcal{L}_{\Theta_t}(x)\big)\in \mathcal{R}^{n}$, and
every row of $\boldsymbol{W}^{P}$ equals to $\phi^{P}\big(\mathcal{L}_{\Theta_t}(\mathcal{B}_t), {\Theta}_{t}\big)\in \mathcal{R}^{M}$.
Hence, $rank(\boldsymbol{W}^{I})=rank(\boldsymbol{W}^{P})=1$ holds.
\end{proof}
\vspace*{-0.25cm}
\begin{theorem}
For $\boldsymbol{W}^{1D}\in \{\boldsymbol{W}^{I},\boldsymbol{W}^{P}\}$ (see Eq. \eqref{eq:Loss_importance_reweighting} and Eq. \eqref{eq:Loss_parameter_optimizer}) and $\boldsymbol{W}^{2D}$ (see Eq. \eqref{eq:Loss_MeLON}), the following inequality holds: 
\begin{equation}
\nonumber
\min_{\boldsymbol{W}^{1D}}\|\boldsymbol{W}^{*} - \boldsymbol{W}^{1D}\|_2 \geq \min_{\boldsymbol{W}^{2D}}\|\boldsymbol{W}^{*} - \boldsymbol{W}^{2D}\|_2.
\vspace*{-0.5cm}
\end{equation}
\label{lemma:rank_1_previous_strategies}
\end{theorem}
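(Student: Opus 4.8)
The plan is to sandwich both sides of the claimed inequality between singular values of $\boldsymbol{W}^{*}$, using the two preceding lemmas. Writing $\sigma_{j}$ for the $j$-th largest singular value of $\boldsymbol{W}^{*}$ (with the convention $\sigma_{j}=0$ for $j>rank(\boldsymbol{W}^{*})$), I would establish the lower bound $\min_{\boldsymbol{W}^{1D}}\|\boldsymbol{W}^{*}-\boldsymbol{W}^{1D}\|_2 \ge \sigma_{2}$ and the upper bound $\min_{\boldsymbol{W}^{2D}}\|\boldsymbol{W}^{*}-\boldsymbol{W}^{2D}\|_2 \le \sigma_{2}$; chaining them yields the theorem.

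For the lower bound: by the preceding lemma, every feasible $\boldsymbol{W}^{1D}\in\{\boldsymbol{W}^{I},\boldsymbol{W}^{P}\}$ satisfies $rank(\boldsymbol{W}^{1D})\le 1$, because each column of $\boldsymbol{W}^{I}$ equals the single vector $\phi^{I}\!\big(\mathcal{L}_{\Theta_t}(x)\big)$ and each row of $\boldsymbol{W}^{P}$ equals the single vector $\phi^{P}\!\big(\mathcal{L}_{\Theta_t}(\mathcal{B}_t),\Theta_t\big)$. Hence the feasible set for $\boldsymbol{W}^{1D}$ is contained in $\{\boldsymbol{W}:rank(\boldsymbol{W})\le 1\}$, so
\[
\min_{\boldsymbol{W}^{1D}}\|\boldsymbol{W}^{*}-\boldsymbol{W}^{1D}\|_2 \;\ge\; \min_{\boldsymbol{W}:\,rank(\boldsymbol{W})\le 1}\|\boldsymbol{W}^{*}-\boldsymbol{W}\|_2 \;=\; \sigma_{2},
\]
where the equality is Lemma~\ref{lemma:Eckart–Young–Mirsky} with $k=1$ (the degenerate case $rank(\boldsymbol{W}^{*})\le 1$ is trivial, since then $\boldsymbol{W}=\boldsymbol{W}^{*}$ is feasible and $\sigma_2=0$).

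For the upper bound: in contrast to $\boldsymbol{W}^{I}$ and $\boldsymbol{W}^{P}$, the matrix $\boldsymbol{W}^{2D}$ returned by $\phi^{2D}$ need not repeat values along rows or columns (Eq.~\eqref{eq:flex:one}–\eqref{eq:flex:two}), so its feasible set strictly contains that of $\boldsymbol{W}^{1D}$ and includes matrices of rank $\ge 2$. Choosing $\boldsymbol{W}^{2D}$ equal to the best rank-$2$ approximation $\boldsymbol{W}^{*}_{2}=\sum_{r=1}^{2}\sigma_{r}\boldsymbol{U}_{r}\boldsymbol{V}_{r}$ of $\boldsymbol{W}^{*}$ (or $\boldsymbol{W}^{*}$ itself, when $\phi^{2D}$ is expressive enough to realize it), Lemma~\ref{lemma:Eckart–Young–Mirsky} gives
\[
\min_{\boldsymbol{W}^{2D}}\|\boldsymbol{W}^{*}-\boldsymbol{W}^{2D}\|_2 \;\le\; \|\boldsymbol{W}^{*}-\boldsymbol{W}^{*}_{2}\|_2 \;=\; \sigma_{3} \;\le\; \sigma_{2}.
\]
Combining the two displays, $\min_{\boldsymbol{W}^{1D}}\|\boldsymbol{W}^{*}-\boldsymbol{W}^{1D}\|_2 \ge \sigma_{2}\ge\sigma_{3}\ge \min_{\boldsymbol{W}^{2D}}\|\boldsymbol{W}^{*}-\boldsymbol{W}^{2D}\|_2$, which is the desired inequality.

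The main obstacle I anticipate is making the rank claim for $\boldsymbol{W}^{2D}$ airtight: Eq.~\eqref{eq:flex:one}–\eqref{eq:flex:two} only state that distinct interaction--parameter pairs \emph{may} receive distinct learning rates, which is weaker than guaranteeing that $\phi^{2D}$ can actually output a rank-$\ge 2$ matrix on the particular inputs $\{(x_i,\mathcal{L}_{\Theta_t}(x_i),\theta_{t,m})\}$ of a batch, and moreover the sigmoid confines the entries to $(0,1)$. I would therefore add a mild expressiveness hypothesis --- e.g.\ that $\phi^{2D}$ is injective separately in its interaction and parameter arguments (as in the earlier draft statement), which makes the induced matrix generically of rank $\min(n,M)>1$, or simply invoke the universal-approximation capacity of the meta-model --- and observe that only $rank(\boldsymbol{W}^{2D})\ge 2$ for a single feasible choice is needed; everything else reduces to the two lemmas and the sub-multiplicative bound $\|\Theta^{*}-\hat{\Theta}\|_2\le\|\nabla_{\Theta}\mathcal{L}\|_2\,\|\boldsymbol{W}^{*}-\boldsymbol{W}\|_2$ already recorded in the text.
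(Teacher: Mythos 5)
Your proof takes essentially the same route as the paper's: the rank-one lemma plus Eckart--Young--Mirsky force $\min_{\boldsymbol{W}^{1D}}\|\boldsymbol{W}^{*}-\boldsymbol{W}^{1D}\|_2\ge\sigma_2$, while the two-directional feasible set admits a higher-rank (hence at-least-as-good) approximant, giving $\min_{\boldsymbol{W}^{2D}}\|\boldsymbol{W}^{*}-\boldsymbol{W}^{2D}\|_2\le\sigma_2$. The expressiveness caveat you flag is genuine but is present in the paper's own proof as well---it merely asserts in a footnote that $rank(\boldsymbol{W}^{2D})$ ``can be greater than one'' without showing the better approximant is realizable by $\phi^{2D}$---so your added injectivity/expressiveness hypothesis only makes the argument tighter than the published version.
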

\begin{proof}
Lemma~\ref{lemma:Eckart–Young–Mirsky}, Lemma \ref{lemma:rank_1_previous_strategies}, and $\boldsymbol{W}^{1D}\in \{\boldsymbol{W}^{I},\boldsymbol{W}^{P}\}$ imply
$\min_{\boldsymbol{W}^{1D}}\|\boldsymbol{W}^{*} - \boldsymbol{W}^{1D}\| = \sigma_{2}$.
On the other hand, $rank(\boldsymbol{W}^{2D}) \geq 1$,\footnote{Specifically, by Eq.~\eqref{eq:flex:one}  and Eq.~\eqref{eq:flex:two}, $rank(\boldsymbol{W}^{2D})$ is not necessarily one and can be greater than one.}
Thus, by Lemma~\ref{lemma:Eckart–Young–Mirsky}, $\min_{\boldsymbol{W}^{2D}}\|\boldsymbol{W}^{*} - \boldsymbol{W}^{2D}\| \leq \sigma_{2}$, which concludes the proof, holds.
\end{proof}
In the experiments, we empirically validate the advantage of the \emph{two}-directional flexibility of $\boldsymbol{W}^{2D}$. 

\section{Evaluation}
\label{sec:evaluation}

\newcolumntype{L}[1]{>{\raggedright\let\newline\\\arraybackslash\hspace{0pt}}m{#1}}
\newcolumntype{X}[1]{>{\centering\let\newline\\\arraybackslash\hspace{0pt}}p{#1}}
\newcolumntype{Y}[1]{>{\let\newline\\\arraybackslash\hspace{1pt}}m{#1}}
\newcolumntype{C}[1]{>{\centering\arraybackslash}p{#1}}
\newcolumntype{M}[1]{>{\centering\arraybackslash}m{#1}}

Our evaluation was conducted to support the following:
\vspace*{-0.05cm}
\begin{itemize}[leftmargin=12pt] 
\item The performance improvement by \algname{} is {consistent} for various datasets and recommenders.
\item \algname{} helps recommenders quickly adapt to users' up-to-date interest over time.
\item The two-directional flexibility in \algname{} is very effective for recommendation. 
\item The training overhead of \algname{} is affordable.
\end{itemize}

\subsection{Experiment Settings}

\begin{table}[t!]
\center
\small
\begin{tabular}{crrr}
\toprule
 Dataset           &  Users      &  Items   &  Interactions \\ \midrule
 Adressa           & 29,589     & 1,457    & 1,191,114  \\  
 Amazon            & 91,013     & 118,031  & 3,625,349  \\ 
 Yelp              & 60,543     & 74,249   & 2,880,520  \\  
\bottomrule
\end{tabular}
\vspace*{-0.15cm}
\caption{Summary of the three real-world datasets.} 
\vspace*{-0.6cm}
\label{Tab:dataset_summary}
\end{table}

\subsubsection{Datasets.}
We used three real-world online recommendation benchmark datasets: Adressa\,\cite{gulla2017adressa}, Amazon\,\cite{ni2019justifying}, and Yelp\footnote{\url{https://www.kaggle.com/yelp-dataset/yelp-dataset}}, as summarized in Table \ref{Tab:dataset_summary}. The duration that a user's interest persists varies across datasets; relatively short duration for news in Adressa, typically longer duration for locations in Yelp, and in-between them for products in Amazon. 

\subsubsection{Algorithms and Implementation Details.}
For the base recommender, we used two popular personalized recommender algorithms: BPR\,\cite{koren2009matrix, rendle2012bpr} and NCF\,\cite{he2017neural}. 
For the online training strategy, we compared \algname{} with six update methods, namely Default, eALS\,\cite{he2016fast}, MWNet\,\cite{shu2019meta}, MetaSGD\,\cite{li2017meta}, S$^2$Meta\,\cite{du2019sequential}, and SML\,\cite{zhang2020retrain}.
``Default'' is the standard fine-tuning strategy, and the remaining methods are based on either importance reweighting or meta-optimization. Hence, $14$ combinations of two recommenders and seven update strategies were considered for evaluation. The experiment setting was exactly the same for all the combinations.


\begin{table*}[htb!]
\resizebox{\textwidth}{!}{
\begin{tabular}{c|l|ccc|ccc|ccc|ccc}\toprule
\centering{\multirow{2}{*}{Dataset}} & \centering{\multirow{2}{*}{\, Method}} & \multicolumn{6}{c}{NCF}  \vline & \multicolumn{6}{c}{BPR}\\ \cline{3-14}
\rule{0pt}{2.5ex}& & HR@5 &  HR@10 & HR@20 & \!\!NDCG@5\!\! & \!\!NDCG@10\!\! & \!\!\!NDCG@20\!\!\! & HR@5 &  HR@10 & HR@20 & \!\!NDCG@5\!\! & \!\!NDCG@10\!\! & \!\!\!NDCG@20\!\!\! 
\\[-0.25ex] \midrule

\multirow{7}{*}{\centering\rotatebox[origin=c]{90}{\makecell[c]{Adressa}}}
& Default
& 0.334$\pm$0.007  & 0.407$\pm$0.009  & 0.502$\pm$0.017 & 0.283$\pm$0.005  & 0.306$\pm$0.006 & 0.330$\pm$0.007
& 0.292$\pm$0.002  & 0.359$\pm$0.002  & 0.422$\pm$0.001 & 0.250$\pm$0.001  & 0.272$\pm$0.001 & 0.288$\pm$0.001
\\

& eALS
& \underline{0.664$\pm$0.006}  & \underline{0.750$\pm$0.006}  & \underline{0.826$\pm$0.004} & \underline{0.542$\pm$0.004}  & \underline{0.570$\pm$0.004} & \underline{0.589$\pm$0.003}
& \underline{0.443$\pm$0.008}  & \underline{0.520$\pm$0.011}  & 0.613$\pm$0.016 & \underline{0.371$\pm$0.006} & \underline{0.396$\pm$0.007} & \underline{0.419$\pm$0.008}
\\

& MWNet
& 0.325$\pm$0.009  & 0.392$\pm$0.010 & 0.480$\pm$0.014 & 0.276$\pm$0.007  & 0.297$\pm$0.007 & 0.319$\pm$0.008
& 0.289$\pm$0.001  & 0.356$\pm$0.001 & 0.421$\pm$0.008 & 0.248$\pm$0.001  & 0.269$\pm$0.001 & 0.285$\pm$0.000
\\

& MetaSGD  
& 0.275$\pm$0.000  & 0.406$\pm$0.002  & 0.686$\pm$0.002 & 0.229$\pm$0.000  & 0.270$\pm$0.000 & 0.340$\pm$0.000
& 0.276$\pm$0.002  & 0.405$\pm$0.002  & \underline{0.686$\pm$0.002} & 0.230$\pm$0.001  & 0.271$\pm$0.001 & 0.340$\pm$0.001
\\

& S$^2$Meta
& 0.276$\pm$0.016  & 0.396$\pm$0.039  & 0.548$\pm$0.062 & 0.221$\pm$0.008  & 0.260$\pm$0.015 & 0.298$\pm$0.021
& 0.278$\pm$0.015  & 0.401$\pm$0.037  & 0.559$\pm$0.060 & 0.223$\pm$0.007  & 0.262$\pm$0.014 & 0.302$\pm$0.020
\\

& SML
& N/A  & N/A  & N/A  & N/A   & N/A   & N/A    
& 0.270$\pm$0.001  & 0.330$\pm$0.002  & 0.399$\pm$0.001 & 0.235$\pm$0.001  & 0.255$\pm$0.001 & 0.272$\pm$0.001
\\[0.5ex]\cline{2-14}\rule{0pt}{2ex}

& \algname{}
& \textbf{0.863$\pm$0.004}  & \textbf{0.954$\pm$0.000} & \textbf{0.982$\pm$0.000} & \textbf{0.626$\pm$0.011}  & \textbf{0.656$\pm$0.010} & \textbf{0.664$\pm$0.010}
& \textbf{0.877$\pm$0.004}  & \textbf{0.958$\pm$0.001} & \textbf{0.983$\pm$0.000} & \textbf{0.671$\pm$0.007}  & \textbf{0.698$\pm$0.006} & \textbf{0.705$\pm$0.005}
\\
\midrule

\multirow{7}{*}{\rotatebox[origin=c]{90}{\makecell[c]{Amazon}}}
& Default
& 0.168$\pm$0.001 & 0.244$\pm$0.002 & 0.359$\pm$0.006 & 0.115$\pm$0.001 & 0.140$\pm$0.001 & 0.168$\pm$0.001
& 0.246$\pm$0.002 & 0.339$\pm$0.003 & 0.457$\pm$0.003 & 0.172$\pm$0.002 & 0.202$\pm$0.002 & 0.231$\pm$0.001
\\

& eALS
& 0.219$\pm$0.009 & 0.323$\pm$0.014 & 0.462$\pm$0.019 & 0.148$\pm$0.006 & 0.182$\pm$0.007 & 0.216$\pm$0.008
& \underline{0.327$\pm$0.002} & \underline{0.425$\pm$0.002} & \underline{0.542$\pm$0.001} & \underline{0.238$\pm$0.002} & \underline{0.270$\pm$0.002} & \underline{0.299$\pm$0.001}
\\

& MWNet
& 0.169$\pm$0.002 & 0.247$\pm$0.005 & 0.366$\pm$0.010 & 0.116$\pm$0.001 & 0.142$\pm$0.002 & 0.171$\pm$0.003
& 0.244$\pm$0.001 & 0.339$\pm$0.002 & 0.456$\pm$0.003 & 0.171$\pm$0.001 & 0.201$\pm$0.001 & 0.231$\pm$0.001
\\

& MetaSGD
& 0.151$\pm$0.003  & 0.214$\pm$0.005 & 0.317$\pm$0.006 & 0.104$\pm$0.002  & 0.125$\pm$0.003 & 0.150$\pm$0.002
& 0.148$\pm$0.002  & 0.214$\pm$0.005 & 0.314$\pm$0.005 & 0.103$\pm$0.001  & 0.123$\pm$0.002 & 0.148$\pm$0.001
\\

& S$^2$Meta
& \underline{0.292$\pm$0.005}  & \underline{0.390$\pm$0.007} & \underline{0.497$\pm$0.010} & \underline{0.192$\pm$0.003}  & \underline{0.224$\pm$0.004} & \underline{0.250$\pm$0.004}
& 0.270$\pm$0.029  & 0.367$\pm$0.037 & 0.474$\pm$0.040 & 0.178$\pm$0.001  & 0.209$\pm$0.021 & 0.237$\pm$0.216
\\

& SML
& N/A  & N/A  & N/A  & N/A   & N/A   & N/A   
& 0.220$\pm$0.001  & 0.307$\pm$0.001 & 0.409$\pm$0.001 & 0.153$\pm$0.001  & 0.181$\pm$0.001 & 0.207$\pm$0.000
\\[0.5ex]\cline{2-14}\rule{0pt}{2ex}

& MeLON
& \textbf{0.324$\pm$0.040}  & \textbf{0.519$\pm$0.034}  & \textbf{0.807$\pm$0.014} & \textbf{0.225$\pm$0.031}  & \textbf{0.287$\pm$0.028} & \textbf{0.360$\pm$0.020}
& \textbf{0.363$\pm$0.016}  & \textbf{0.506$\pm$0.013}  & \textbf{0.650$\pm$0.007} & \textbf{0.248$\pm$0.012}  & \textbf{0.294$\pm$0.011} & \textbf{0.330$\pm$0.009}
\\
\midrule

\multirow{7}{*}{\rotatebox[origin=c]{90}{\makecell[c]{Yelp}}}
& Default
& \underline{0.659$\pm$0.001}  & 0.816$\pm$0.002  & 0.923$\pm$0.002 & \underline{0.477$\pm$0.001}  & \underline{0.528$\pm$0.001} & \underline{0.555$\pm$0.002}
& 0.600$\pm$0.003  & 0.766$\pm$0.004 & 0.883$\pm$0.005 & 0.426$\pm$0.002  & 0.480$\pm$0.002 & 0.509$\pm$0.002
\\

& eALS
& 0.618$\pm$0.009  & 0.781$\pm$0.014  & 0.901$\pm$0.001 & 0.438$\pm$0.006  & 0.491$\pm$0.007 & 0.521$\pm$0.001
& \textbf{0.677$\pm$0.003}  & \textbf{0.831$\pm$0.003} & \textbf{0.922$\pm$0.002} &\textbf{0.488$\pm$0.002} & \textbf{0.538$\pm$0.002} & \textbf{0.562$\pm$0.002}
\\

& MWNet
& 0.658$\pm$0.002  & \underline{0.818$\pm$0.005} & \underline{0.926$\pm$0.003} & 0.475$\pm$0.001  & 0.527$\pm$0.002 & \underline{0.555$\pm$0.002}
& 0.603$\pm$0.001  & 0.771$\pm$0.001  & \underline{0.890$\pm$0.001} & 0.428$\pm$0.000  & 0.483$\pm$0.000 &0.513$\pm$0.000
\\

& MetaSGD
& 0.207$\pm$0.003  & 0.309$\pm$0.005  & 0.433$\pm$0.018 & 0.136$\pm$0.002  & 0.169$\pm$0.003 & 0.200$\pm$0.005
& 0.209$\pm$0.000  & 0.321$\pm$0.003  & 0.451$\pm$0.001 & 0.137$\pm$0.000  & 0.174$\pm$0.001 & 0.206$\pm$0.002
\\

& S$^2$Meta
& 0.393$\pm$0.005  & 0.525$\pm$0.007  & 0.654$\pm$0.113 & 0.277$\pm$0.003  & 0.320$\pm$0.004 & 0.351$\pm$0.086
& 0.208$\pm$0.001  & 0.323$\pm$0.001  & 0.471$\pm$0.002 & 0.137$\pm$0.000  & 0.174$\pm$0.001 & 0.211$\pm$0.001
\\

& SML
& N/A  & N/A  & N/A  & N/A   & N/A   & N/A   
& 0.478$\pm$0.000  & 0.614$\pm$0.000 &0.720$\pm$0.000 & 0.338$\pm$0.000  & 0.382$\pm$0.000 & 0.409$\pm$0.000
\\[0.5ex]\cline{2-14}\rule{0pt}{2ex}

& \algname{}
& \textbf{0.779$\pm$0.010}  & \textbf{0.923$\pm$0.003}  & \textbf{0.980$\pm$0.006} & \textbf{0.563$\pm$0.027}  & \textbf{0.610$\pm$0.024} & \textbf{0.624$\pm$0.020}
& \underline{0.619$\pm$0.017}  & \underline{0.779$\pm$0.016}  & 0.886$\pm$0.011 & \underline{0.439$\pm$0.014} & \underline{0.491$\pm$0.014} & \underline{0.519$\pm$0.012}
\\
\bottomrule
\end{tabular}
}
\vspace*{-0.25cm}
\caption{Overall online recommendation performance.\protect\footnotemark~The average of five executions with the standard error are reported. The best results are marked in bold, and the second best results are underlined.}
\vspace*{-0.45cm}
\label{Tab:online_update_performance_comparison}
\end{table*}

\begin{figure}[htb!]
\centering
\begin{subfigure}[t!]{0.47\textwidth}
    \includegraphics[width=\textwidth]{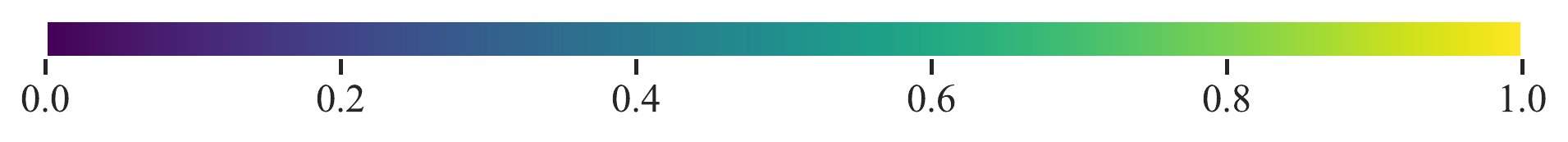}
\end{subfigure}
\begin{subfigure}[t!]{0.47\textwidth}
\includegraphics[width=0.49\textwidth]{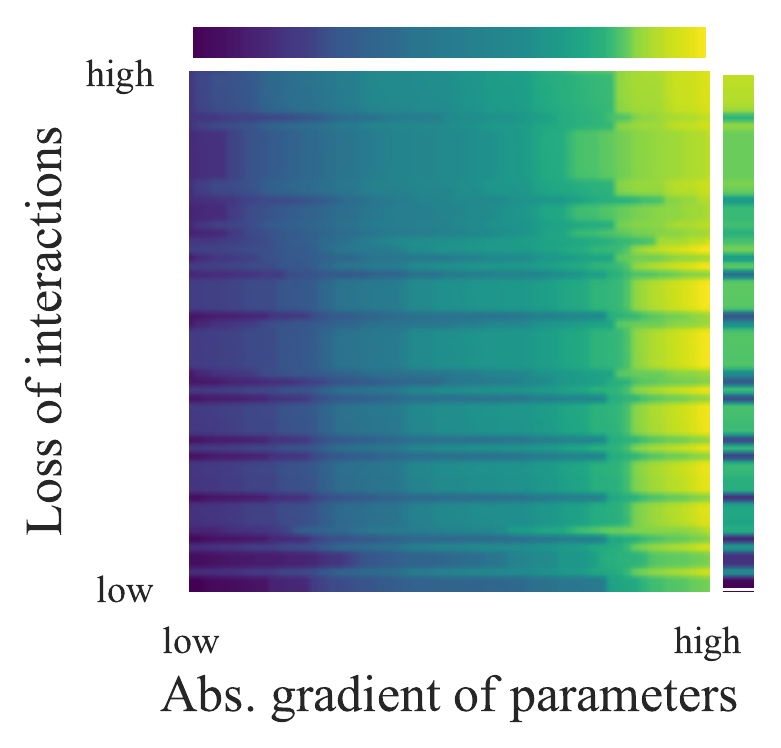}
\includegraphics[width=0.49\textwidth]{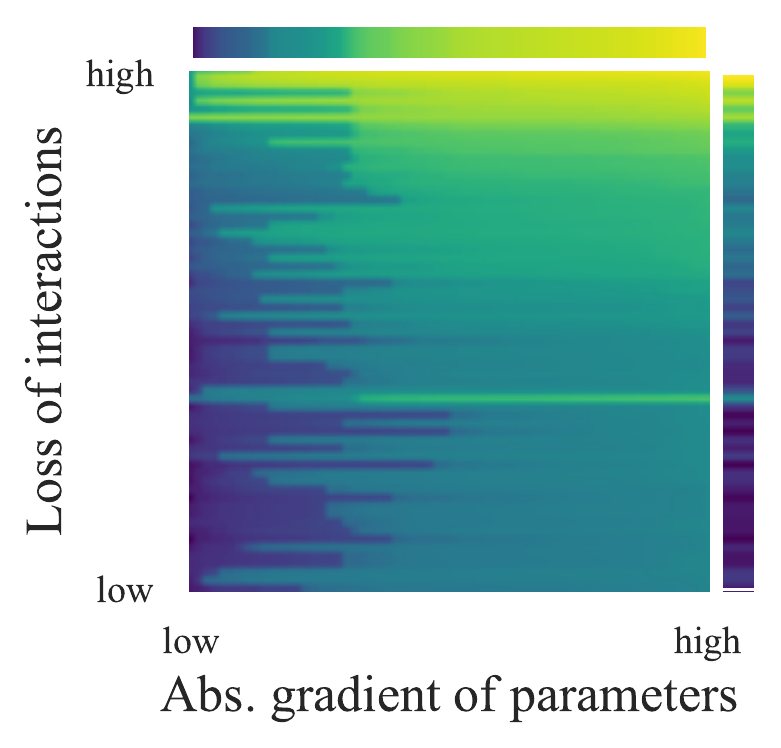}
\end{subfigure}
\text{\small \hspace{0.7cm} (a) NCF on Adressa. \hspace{1.4cm} (b) BPR on Adressa.}
\begin{subfigure}[t!]{0.47\textwidth}
\includegraphics[width=0.49\textwidth]{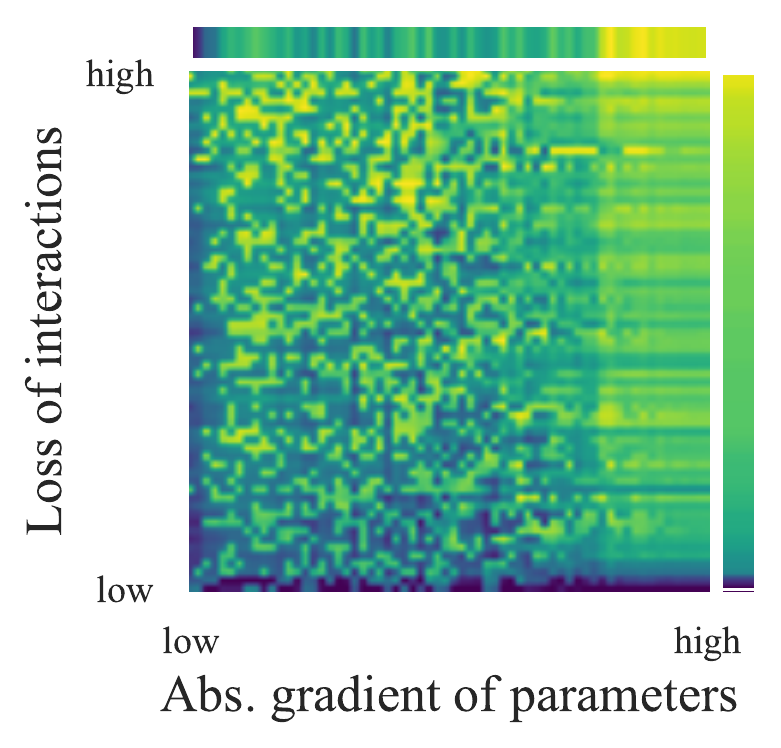}
\includegraphics[width=0.49\textwidth]{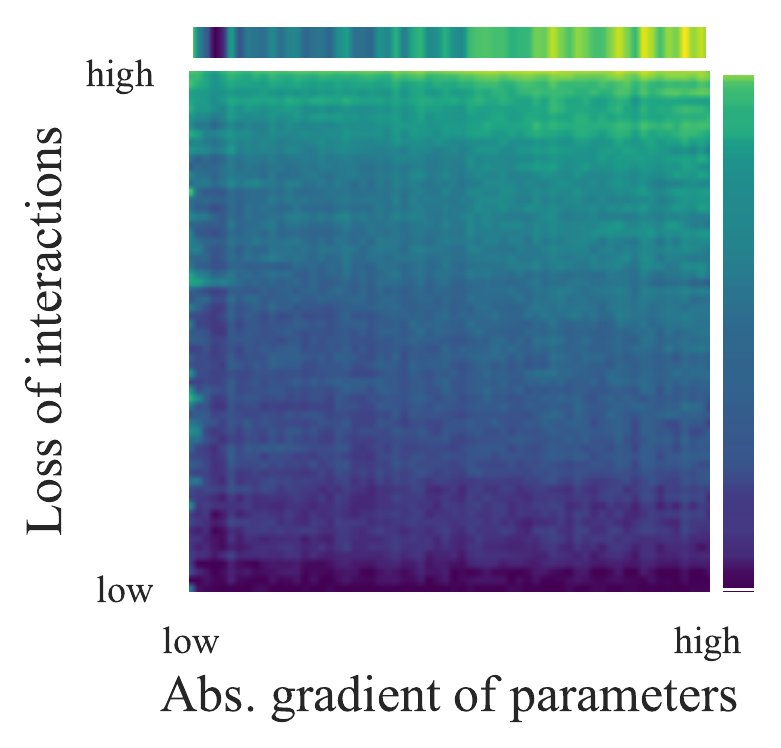}
\end{subfigure}
\text{\small \hspace{0.8cm} (c) NCF on Yelp. \hspace{1.7cm} (d) BPR on Yelp. \hspace{0.2cm}}
\vspace{-0.2cm}
\caption{Learning rates obtained by \algname{} for interaction-parameter pairs. \algname{} tends to provide higher learning rates for the pairs with large losses and gradients.}
\label{fig:MeLON_learning_rate_visualization}
\vspace*{-0.6cm}
\end{figure}

\begin{figure*}[htb!]
\begin{center}
\begin{subfigure}[t!]{0.75\textwidth}
\includegraphics[width=\textwidth]{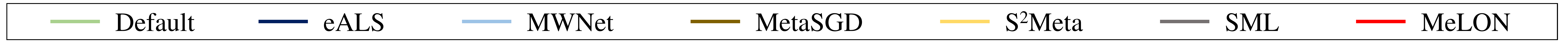}
\end{subfigure}
\end{center}
\begin{subfigure}[t!]{\textwidth}
\includegraphics[width=0.25\textwidth]{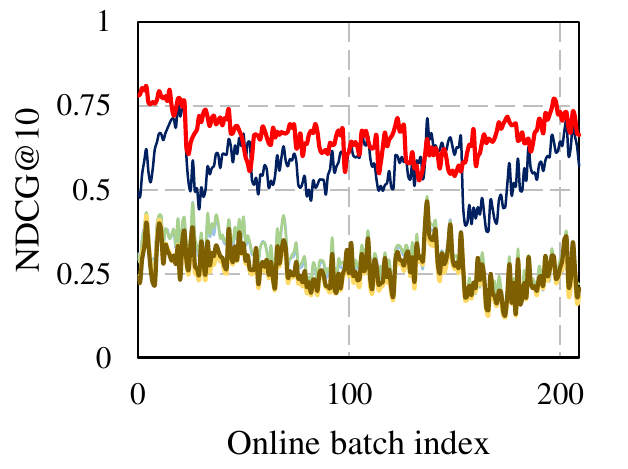}
\hspace*{-0.2cm}
\includegraphics[width=0.25\textwidth]{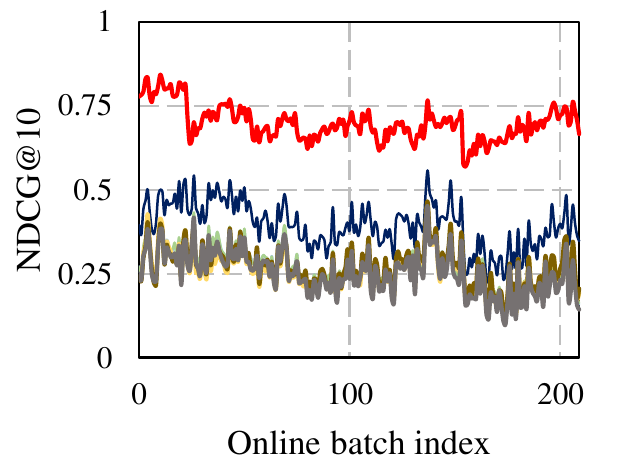}
\hspace*{-0.2cm}
\includegraphics[width=0.25\textwidth]{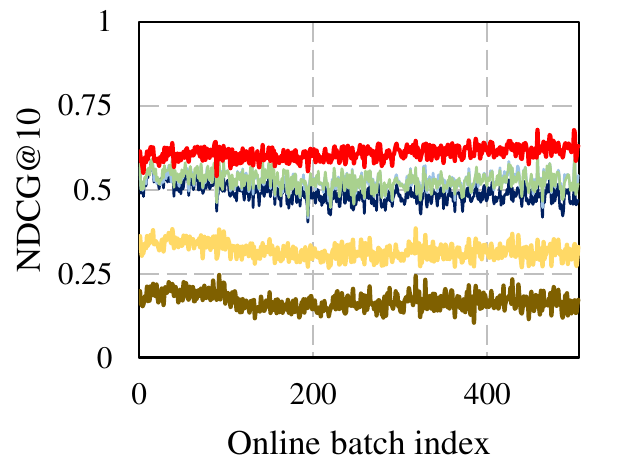}
\hspace*{-0.2cm}
\includegraphics[width=0.25\textwidth]{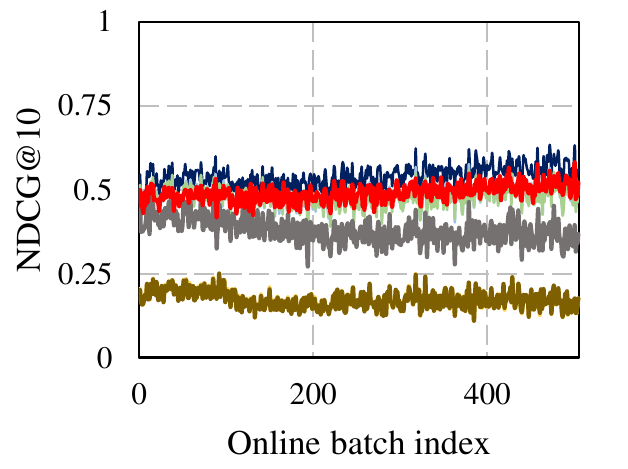}
\end{subfigure}
\\{\color{white} hh \vspace*{-0.3cm}} \\
\hspace*{1.08cm} {\small (a) NCF on Adressa.} \hspace*{1.7cm} {\small (b) BPR on Adressa.} \hspace*{1.95cm} {\small (c) NCF on Yelp.} \hspace*{1.95cm} {\small (d) BPR on Yelp.}
\vspace{-0.2cm}
\caption{Recommendation performance over each online batch in the Adressa and Yelp datasets.}
\label{fig:online_update_performance}
\vspace*{-0.4cm}
\end{figure*}

\subsubsection{Evaluation Metrics.}
We used two widely-used evaluation metrics, hit rate\,(HR) and normalized discounted cumulative gain\,(NDCG). 
Given a recommendation list, HR measures the rate of true user-interacted items in the list, while NDCG additionally considers the ranking of the user-interacted items.
The two metrics were calculated for top$@5$, top$@10$, and top$@20$ items, respectively. 
For each mini-batch on prequential evaluation, a recommender estimates the rank of each user's $1$ interacted item and randomly-sampled $99$ non-interacted items, and this technique is widely used in the literature\,\cite{he2016fast, du2019sequential} because it is time-consuming to rank all non-interacted items.

Please see Section B of the supplementary material for more details of the experiment settings.

\footnotetext{SML cannot be implemented on NCF because the transfer network of SML is intended to work on embedding parameters which do not exist in NCF.}

\subsection{Overall Performance Comparison}

Table \ref{Tab:online_update_performance_comparison} shows the top-$k$ recommendation performance with varying update strategies for the three datasets. 
Overall, \algname{} greatly boosts the recommendation performance compared with the other update strategies in general. It outperforms the other update strategies with NCF in terms of HR$@5$ by up to 29.9\%, 10.9\%, and 18.2\% in Adressa, Amazon, and Yelp, respectively.
This benefit is attributed to the two-directional flexibility of \algname{}, which successfully adapts learning rates on each interaction-parameter pair. That is, \algname{} considers the importance of the user-item interaction as well as the role of the parameter, while the compared strategies consider only either of them.

\smallskip
\noindent\textbf{Flexibility Gap.} 
Figure \ref{fig:MeLON_learning_rate_visualization} displays the learning rate matrix of all interaction-parameter pairs in \algname{} when trained on Adressa and Yelp. Here, each square matrix displays the two-directional learning rates\,($\boldsymbol{W}^{2D}$) for all interaction-parameter pairs. On the other hand, the top and right bars are the averages along one axis, which can be considered as the one-directional learning rates\,($\boldsymbol{W}^{1D}$).
The learning rates in the square matrix are flexibly determined for each interaction-parameter pair, and high learning rates are assigned when the loss or gradient is high. A row or column does \emph{not} stick to the same learning rate, and this visualization clearly demonstrates the necessity of the two-directional flexibility.
The gap between the learning rate in $\boldsymbol{W}^{2D}$ and that in $\boldsymbol{W}^{1D}$ in Figure \ref{fig:MeLON_learning_rate_visualization} is directly related to how quickly a recommender adapts to up-to-date user interests, which in turn leads to the performance difference between \algname{} and the previous update strategies.

\smallskip
\noindent\textbf{In-Depth Comparison.}
We provide interesting observations for the online update strategies: 
{
\vspace*{-0.1cm}
\begin{itemize}[leftmargin=12pt] 
\item Importance reweighting works well for datasets where a user's interest changes slowly (e.g., Yelp);
\item Meta-optimization works well for datasets where a user's interest changes quickly (e.g., Adressa);
\item \algname{} works well for both types of datasets.
\end{itemize}
\vspace*{-0.1cm}
}
Specifically, in terms of HR$@20$, an importance reweighting strategy, MWNet, enhances the recommendation performance in Yelp, but shows worse performance in Adressa than Default. In contrast, an opposite trend is observed for the two meta-optimization strategies, MetaSGD and S$^2$Meta.
Thus, we conjecture that, for time-sensitive user interest, such as news in Adressa, it is more important to focus on the parameter roles, which could be associated with the topics in this dataset. On the other hand, for time-insensitive user interest, such as places in Yelp, it would be better to focus on the interaction itself. This claim can be further supported by the different trends in Figure \ref{fig:MeLON_learning_rate_visualization}, where horizontal\,(i.e., parameter-wise) lines are more visible in the Adressa dataset, but vertical\,(i.e., interaction-wise) lines are more visible in the Yelp dataset.

\subsection{Performance Trend over Time} 
Figure \ref{fig:online_update_performance} shows the NDCG$@10$ performance trend of seven update strategies over each online test batch of the Adressa and Yelp datasets.
Overall, only \algname{} consistently adheres to the highest performance (close to the highest in Figure \ref{fig:online_update_performance}(d)) compared with other update strategies during the entire test period. The performance gap between \algname{} and the others widens especially in Addressa because its news data becomes quickly outdated and needs more aggressive adaptation for better recommendation.
In this regard, eALS also shows better performance than others since it always assigns high weights for new user-item interactions.
On the other hand, in the Yelp dataset where the user's interest may not change quickly, all the update strategies show small performance fluctuations. 


{
\begin{table}[t!]
\center
\scriptsize
\begin{tabular}{X{1.3cm} X{1.8cm} X{1.8cm} X{1.8cm} }
\toprule
 Dataset           & \algname{}$_I$      & \algname{}$_P$   & {\bf \algname{}}        \\ \midrule
 Adressa           & 0.293 $\pm$ 0.003     & 0.278 $\pm$ 0.015  & {\bf 0.877 $\pm$ 0.004}   \\
 Amazon            & 0.248 $\pm$ 0.002     & 0.269 $\pm$ 0.029  & {\bf 0.363 $\pm$ 0.016}   \\
 Yelp              & 0.605 $\pm$ 0.001     & 0.208 $\pm$ 0.000  & {\bf 0.619 $\pm$ 0.017}   \\
\bottomrule
\end{tabular}
\vspace*{-0.2cm}
\caption{Ablation on \algname{} components. HR@5 and its standard error for BPR are reported. \algname{}$_I$ and \algname{}$_P$ are the variants with only importance reweighting and meta-optimization, respectively.}
\vspace*{-0.4cm}
\label{Tab:MeLON_ablation}
\end{table}
}
\subsection{Ablation Study on Two-Directional Flexibility}
We conduct an ablation study to examine the two-directional flexibility of \algname{} by using its two variants with \emph{partial} flexibility:
(i) \algname{}$_I$, an importance reweighting variant without parameter-wise inputs and (ii) \algname{}$_P$, a meta-optimization variant without interaction-wise inputs.
Table \ref{Tab:MeLON_ablation} shows the performance of the two variants along with the original \algname{} on the three datasets. Of course, \algname{} is far better than the variants.
In addition, the performance of \algname{}$_I$ is similar to those of the importance reweighting strategies (e.g., MWNet) in Table \ref{Tab:online_update_performance_comparison}, while \algname{}$_P$ shows the results similar to the meta-optimization strategies (e.g., S$^2$Meta). Therefore, the power of \algname{} is attained when the two-directional flexibility is accompanied.

\subsection{Elapsed Time for Online Update}
Table \ref{Tab:online_update_elapsed_time} shows the average elapsed time of the seven update strategies per online batch update.
Overall, all the update strategies including \algname{} show affordable update time except S$^2$Meta which consumes even seconds in Yelp and Amazon. 
That is, 
\algname{} is still capable of handling multiple recommender updates within a second, which is fast enough for online recommender training.
The speed of \algname{} is improved by its \emph{selective} parameter update; given a user-item interaction, \algname{} updates only the parameters involved with the recommender's computation for the interaction.
This technique helps \algname{} maintain its competitive update speed, despite the use of a meta-model which is believed to be time-consuming.

\begin{table}[t]
\scriptsize
\resizebox{0.47\textwidth}{!}{
\begin{tabular}{c | c | c c c c c c |c}\toprule
Dataset & \!\! Model\!\!  & \!\!\!\! {Default}\!\!\!\!  & \!\!\!\! {eALS}\!\!\!\!          & \!\!\!\! {MWNet}\!\!\!\!  & \!\!\!\! {MetaSGD}\!\!\!\!   & \!\! {SML}\!\!    & \!\! S$^2$Meta\!\! & \!\!\!\! {MeLON}\!\!\!\!   \\ \midrule
\multirow{2}{*}{Adressa}
& NCF & 0.011 & 0.011 & 0.023 & 0.022 & N/A & 0.317 & 0.257 \\ 
& BPR & 0.011 & 0.011 & 0.012 & 0.014 & 0.163  & 0.112 & 0.076 \\ \midrule
\multirow{2}{*}{Amazon}
& NCF & 0.010 & 0.010 & 0.026 & 0.022 & N/A & 2.922 & 0.263 \\ 
& BPR & 0.009 & 0.009 & 0.014 & 0.012 & 0.130 & 1.630 & 0.072 \\ \midrule
\multirow{2}{*}{Yelp}
& NCF & 0.008 & 0.008 & 0.016 & 0.018 & N/A     & 3.345 & 0.258  \\
& BPR & 0.007 & 0.007 & 0.008 & 0.011 & 0.051 & 0.431 & 0.041  \\ \bottomrule
\end{tabular}
}
\vspace*{-0.15cm}
\caption{Elapsed time (sec) of seven update strategies per online batch.}
\vspace*{-0.6cm}
\label{Tab:online_update_elapsed_time}
\end{table}
\section{Conclusion}
\label{sec:conclusion}

In this paper, we proposed \algname{}, a meta-learning-based highly flexible update strategy for online recommender systems. 
\algname{} provides learning rates \emph{adaptively} to each parameter-interaction pair to help recommender systems be aligned with up-to-date user interests.
To this end, \algname{} first represents the meaning of a user-item interaction and the role of a parameter using a GAT and a neural mapper.
Then, the adaptation layer exploits the two representations to determine the optimal learning rate.
Extensive experiments were conducted using three real-world online service datasets, and the results confirmed the higher accuracy of \algname{} by virtue of its \emph{two}-directional flexibility as validated in the ablation study and theoretical analysis.

\section*{Acknowledgement}
This work was partly supported by the National Research Foundation of Korea\,(NRF) grant funded by the Korea government\,(Ministry of Science and ICT) (No. 2020R1A2B5B03095947) and Institute of Information \& Communications Technology Planning \& Evaluation\,(IITP) grant funded by the Korea government\,(MSIT) (No.\ 2020-0-00862, DB4DL: High-Usability and Performance In-Memory Distributed DBMS for Deep Learning).

\bibliography{8-reference}

\clearpage \appendix
\onecolumn{
\vspace*{3\baselineskip}
\begin{center}
\vspace{0.5cm}
\textbf{\LARGE Supplementary Material for Paper 2570:}
\end{center}
\begin{center}
\textbf{\LARGE Meta-Learning for Online Update of Recommender Systems}
\end{center}
\vspace{1cm}
}

\begin{multicols}{2}

\section{A.\quad Online Recommender Training}
\label{sec:algorithm}

While an online recommender can be trained on new user-item interactions with adaptive learning rates by the meta-model \algname{}, the optimality condition varies with time.
Therefore, \algname{} should  be continuously updated along with the recommender to avoid being stale.
To this end, as shown in Figure \ref{fig:MeLON_update}, for every incoming mini-batch $\mathcal{B}_t$, we first conduct two steps to train the \emph{meta-model} $\phi^{2D}$ before updating the \emph{recommender model} $\boldsymbol{\Theta}$, following the common update procedure of meta-learning\,\citeapx{ren2018learning, shu2019meta}. Note that Figure  \ref{fig:MeLON_update} is made more detailed by clarifying the two steps for $\phi^{2D}$, compared with Figure \ref{fig:recommender_update} (in the main paper).
\begin{enumerate}[leftmargin=14pt]
\item \emph{Recommender model preliminary update}: For the users and items in the new mini-batch $\mathcal{B}_t$ at each iteration, we first derive their last interactions $\mathcal{B}_t^{last}$ before the current interaction.
Then, using the \emph{current} meta-model $\phi^{2D}_t$, the parameter $\boldsymbol{\Theta_t}$ of the recommender model is updated  on the latest interactions $\mathcal{B}_t^{last}$ to create a model with $\tilde{\boldsymbol{\Theta}}$ by 
\vspace*{-0.1cm}
\begin{equation}
\tilde{\Theta}\!=\!{\Theta}_{t}\!-\!\nabla_{{\Theta}_{t}}\!\!\!\sum_{x \in \mathcal{B}_t^{last}}\!\!\!\mathcal{L}_{\Theta_t}(x)\!\cdot\!{\phi}^{2D}_{t}\Big(x,\mathcal{L}_{\Theta_t}(x),\Theta_t\Big).
\label{eq:Temporal_recommender_update}
\end{equation}
\vspace*{-0.4cm}
\item \emph{Meta-model update}: Because $\tilde{\Theta}$ obtained by Eq.\ \eqref{eq:Temporal_recommender_update} is widely known as an inspection on the efficacy of the current meta-model\,\citeapx{ren2018learning, shu2019meta}, the feedback from $\tilde{\Theta}$ is exploited to update the meta-model on the incoming mini-batch $\mathcal{B}_{t}$ by
\vspace{-0.15cm}
\begin{equation}
{\phi}^{2D}_{t+1} ={\phi}^{2D}_{t} - \eta\nabla_{{\phi}^{2D}_{t}}\!\!\sum_{x \in \mathcal{B}_t}\!\!\frac{1}{n}\mathcal{L}_{\tilde{\Theta}}(x),
\label{eq:Meta_update}
\vspace{-0.15cm}
\end{equation}
where $\eta$ is a learning rate for meta-model.

\item \emph{Recommender model update}: Finally, the parameter $\boldsymbol{\Theta}_{t}$ of the recommender model is updated using the \emph{updated} meta-model $\phi^{1D}_{t+1}$ on the  mini-batch $\mathcal{B}_{t}$ by    
\vspace{-0.15cm}
\begin{equation}
\!\!\!{\Theta}_{t+1}\!=\!{\Theta}_{t}\!-\!\nabla_{{\Theta}_{t}}\!\!\sum_{x \in \mathcal{B}_t}\!\!\mathcal{L}_{\Theta_t}(x)\cdot{\phi}^{2D}_{t+1}\Big(x,\!\mathcal{L}_{\Theta_t}(x),\Theta_t\Big).
\label{eq:Main_recommender_update}
\end{equation}
\end{enumerate}
\vspace{-0.15cm}

Note that \algname{} \emph{selectively} performs the update of recommender parameters involved in the recommender's computation for each interaction. Therefore, the required update time for \algname{} is comparable to other update strategies, such as SML and S$^2$Meta, as empirically confirmed in the evaluation results.

\begin{figure}[H]
\begin{center}
\includegraphics[width=0.47\textwidth]{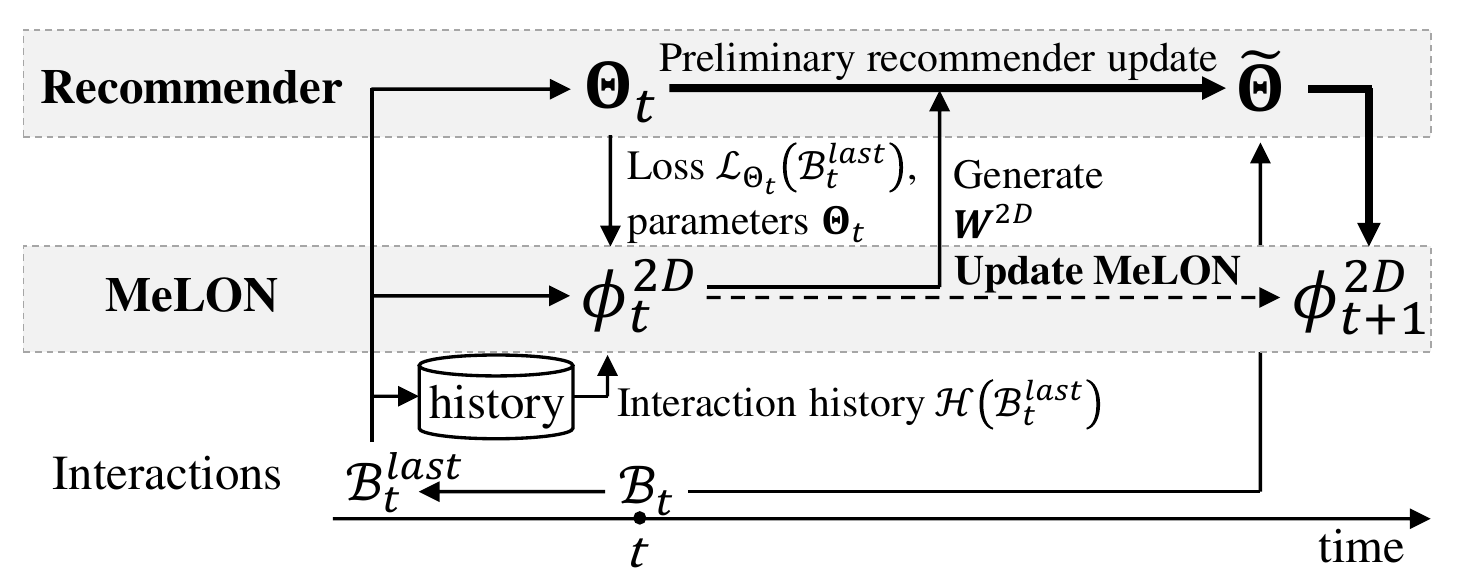}
\end{center}
\vspace*{-0.4cm}
\caption{\algname{} update procedure. For the users and items in every new mini-batch $\mathcal{B}_t$, we retrieve their last interaction $\mathcal{B}_t^{last}$ to conduct a preliminary update of the recommender model with \algname{}. Then, we update \algname{} based on the loss of the updated model $\tilde{\Theta}$ on the new mini-batch $\mathcal{B}_t$.}
\label{fig:MeLON_update}
\vspace{-0.4cm}
\end{figure}

\renewcommand{\algorithmicrequire}{\textsc{Input:}}
\renewcommand{\algorithmicensure}{\textsc{Output:}} 
\newcommand{\INDSTATE}[1][1]{\STATE\hspace{0.5#1\algorithmicindent}}

\setlength{\textfloatsep}{10pt}
\begin{algorithm}[H]
\caption{{Online Training} via \algname{}}
\label{alg:online_training}
\begin{algorithmic}[1]
\REQUIRE { ${\Theta}_t$: recommender model, $\phi^{2D}_t$: meta-model}
\STATE { $t \leftarrow 1; ~\Theta_t, \phi^{2D}_t \leftarrow $ Load pretrained models;} 
\STATE{{\bf while} user-item interactions are coming {\bf do}}
\INDSTATE[1] $\mathcal{B}_{t} \leftarrow$ Get current mini-batch;
\INDSTATE[1] \COMMENT{\textbf{Meta-model update}}
\INDSTATE[1] \COMMENT{(1) Preliminary update by Eq.\,\eqref{eq:Temporal_recommender_update}}
\INDSTATE[1] {\small  $\tilde{\Theta} ={\Theta}_{t} - \nabla_{{\Theta}_{t}}\!\!\sum_{x \in \mathcal{B}_t^{last}}\!\!\mathcal{L}_{\Theta_t}(x)\cdot{\phi}^{2D}_{t}\big(x,\mathcal{L}_{\Theta_t}(x),\Theta_t\big)$};
\INDSTATE[1] \COMMENT{(2) Meta-model update by Eq.\,\eqref{eq:Meta_update}}
\INDSTATE[1] ${\phi}^{2D}_{t+1} ={\phi}^{2D}_{t} - \eta\nabla_{{\phi}^{2D}_{t}}\!\!\sum_{x \in \mathcal{B}_t}\!\!\frac{1}{n}\mathcal{L}_{\tilde{\Theta}}(x)$;
\INDSTATE[1] \COMMENT{\textbf{Recommender update}} 
\INDSTATE[1] \COMMENT{(3) Recommender model update by Eq.\,\eqref{eq:Main_recommender_update}}
\INDSTATE[1] {\small  $\!\!{\Theta}_{t+1}\!=\!{\Theta}_{t}\!-\!\nabla_{{\Theta}_{t}}\!\!\sum_{x \in \mathcal{B}_t}\!\!\mathcal{L}_{\Theta_t}(x)\cdot{\phi}^{2D}_{t+1}\big(x,\mathcal{L}_{\Theta_t}(x),\Theta_t\big)$;}
\INDSTATE[1] {$t \leftarrow t+1;$}
\end{algorithmic}
\end{algorithm}

The online training procedure of \algname{} is described in Algorithm \ref{alg:online_training}.
When a recommender is deployed online, the algorithm conducts the three steps for every new incoming mini-batch of user-item interactions: (1) a preliminary update of the recommender model\,(Lines 5--6) on the last interactions of the users and items in the current mini-batch, (2) an update of the meta-model on a new mini-batch\,(Lines 7--8), and (3) an update of the recommender model on the new mini-batch\,(Lines 10--11). 
We additionally learn a forgetting rate for the current parameter to help quick adaptation by forgetting previous outdated information\,\citeapx{ravi2016optimization}. 
The above procedure repeats for every incoming mini-batch during online service.

Before a recommender is deployed online, both the recommender and the meta-model are typically pre-trained on the past user-item interactions in an {offline} manner. 
Differently from the online training, we first randomly sample a mini-batch $\mathcal{B}$ of user-item interactions to derive the interactions $\mathcal{B}^{last}$.
Then, in each iteration, the recommender and the meta-model are updated in the same way as in the online learning.
The model is trained for a fixed number of epochs, $100$ in our experiments.
Once the offline training completes, we can deploy the recommender and the meta-model in the online recommendation environment.

\newcolumntype{L}[1]{>{\raggedright\let\newline\\\arraybackslash\hspace{0pt}}m{#1}}
\newcolumntype{X}[1]{>{\centering\let\newline\\\arraybackslash\hspace{0pt}}p{#1}}
\newcolumntype{M}[1]{>{\centering\arraybackslash}m{#1}}

\section{B.\quad Details of Experiment Settings}
\label{sec:appendix_data}

Four reproducibility, the source code of \algname{} as well as the datasets are provided as the supplementary material. 

\vspace*{-0.2cm}
\subsection{Datasets}
The explicit user ratings in the Yelp dataset and three Amazon datasets are converted into implicit ones, following the relevant researches\,\citeapx{koren2008factorization, he2017neural};
that is, if a user rated an item, then the rating is considered as a positive user-item interaction.
For prequential evaluation on online recommendation scenarios, we follow a commonly-used approach\,\citeapx{he2016fast}; we sort the interactions in the dataset in chronological order, and divide them into three parts---offline pre-training data, online validation data, and online test data.
Online validation data is exploited to search the hyperparameter setting of the recommenders and update strategies and takes up $10\%$ of test data.
Because user-item interactions are very sparse, we preprocess the datasets, following the previous approaches\,\citeapx{he2017neural, zhang2020retrain}; for all datasets, users and items involved with less than $20$ interactions are filtered out.

Table \ref{Tab:dataset} summarizes the profiles of the five datasets used in the experiments, where the details are as follows.

\textbf{Adressa}
news dataset\,\citeapx{gulla2017adressa} contains user interactions with news articles for one week.
We use the first $95\%$ of data as offline pre-training data, the next $0.5\%$ as online validation data, and the last $4.5\%$ as online test data.

\textbf{Amazon}
review dataset\,\citeapx{ni2019justifying} contains user reviews for the products purchased in Amazon.
Among various categories, we adopt three frequently-used categories, \emph{Book}\,\citeapx{wang2019neural}, \emph{Electronics}\,\citeapx{zhou2018deep}, and \emph{Grocery and Gourmet Food}\,\citeapx{wang2020make}, which vary in the size of interactions and the number of users and items. 
Because there exists almost no overlap among the categories, we perform evaluation on each category and report the average.
Due to the difference in size, we apply different data split ratios for each category.
\begin{itemize}[noitemsep]
    \item Book: $95\%$\,(pre-training): $0.5\%$\,(validation): $4.5\%$\,(test)
    \item Electronics: $90\%$\,(pre-training):$1\%$\,(validation):$9\%$\,(test) 
    \item Grocery: $80\%$\,(pre-training): $2\%$\,(validation): $18\%$\,(test)
\end{itemize}

\textbf{Yelp}
review dataset\footnote{https://www.kaggle.com/yelp-dataset/yelp-dataset}
 contains user reviews for venues, such as bars, cafes, and restaurants.
We use the first $95\%$ of data as offline pre-training data, the next $0.5\%$ as online validation data, and the last $4.5\%$ as online test data.

\begin{table}[H]
\center
\scriptsize
\vspace*{-0.15cm}
\begin{tabular}{crrrc}
\toprule
 Dataset           &  Users      &  Items   &  Interactions & Cold user \\ \midrule
 Adressa           & 29,589     & 1,457   & 1,191,114  & \,\,\,0\% \\  
 Amazon (Book)              & 80,464     & 98,663  & 3,357,109  &      18\% \\ 
 Amazon (Electronics)       & 9,316      & 17,935  & 238,458    & \,\,\,1\% \\ 
 Amazon (Grocery)           & 1,233      & 1,433   & 29,782     & \,\,\,1\% \\ 
 Yelp              & 60,543     & 74,249  & 2,880,520  & \,\,\,0\% \\  
\bottomrule
\end{tabular}
\caption{Summary of the five real-world datasets. A cold user refers to the users who do not exist in the pre-training data but in the online test data.}
\label{Tab:dataset}
\vspace*{-0.1cm}
\end{table}

\subsection{Recommender Baseline Models}
For online recommenders, we use two famous personalized recommender algorithms: BPR\,\citeapx{koren2009matrix, rendle2012bpr} and NCF\,\citeapx{he2017neural}.
\begin{itemize}[leftmargin=13pt]
    \item BPR: To handle implicit feedback, the Bayesian personalized ranking\,(BPR) uses the identifiers of a user and an item to estimate the user's interest on the item by multiplying the user embedding vector $\mathbf{e}_u$ and the item embedding vector $\mathbf{e}_i$.
    \item NCF: Neural collaborative filtering\,(NCF) maintains a generalized BPR and a multi-layer perceptron that have user and item vectors respectively. The results of these two components are later fused by a neural layer to predict a user's interest on an item.
\end{itemize}

To train these recommender algorithms based on implicit feedback data, we employ a ranking loss\,\citeapx{rendle2012bpr};
for a positive item in a user-item interaction, we randomly sample another negative item that the user has not interacted before, and train a recommender algorithm to prioritize the positive item over the negative item.

\subsection{Configuration}
For fair comparison, we follow the optimal hyperparameter settings of the baselines as reported in the original papers, and optimize uncharted ones using HR$@5$ on validation data.
All the experiments are performed with a batch size $256$ and trained for $100$ epochs.
The number of updates on the default update strategy is fixed to be 1 to align with other compared strategies.
The experiments are repeated $5$ times varying random seeds, and we report the average as well as the standard error.
For the graph attention in the first component of \algname{}, we randomly sample $10$ neighbors per target user or item. Besides, for the MLP which learns the parameter roles, the number of hidden layers ($L$) is set to be $2$. 
To optimize a recommender under the default and sample reweighting strategies, we use  Adam\,\citeapx{kingma2014adam} with a learning rate $\eta = 0.001$ and a weight decay $0.001$.
Note that a recommender trained with the meta-optimization strategies is optimized by a meta-model, while the meta-model is optimized with Adam during the meta-update in Eqs.\,\eqref{eq:Temporal_recommender_update} and \eqref{eq:Meta_update}.
Our implementation is written in PyTorch, and the experiments were conducted on Nvidia Titan RTX.
\bibliographystyleapx{aaai22}
\bibliographyapx{8-reference}
\end{multicols}

\end{document}